\documentclass[12pt]{article}
\setlength{\evensidemargin}{0.0cm}
\setlength{\oddsidemargin}{0.0cm}
\setlength{\topmargin}{-1.5cm}
\setlength{\textheight}{8.7in}
\setlength{\textwidth}{16.3cm}
\setlength{\parindent}{0mm}
\setlength{\parskip}{\medskipamount}

\usepackage{amsmath}
\usepackage{amssymb}
\usepackage{amsthm}
\usepackage{graphicx}
\usepackage{lineno}
\usepackage{float}
\usepackage[all]{xy}
\usepackage{mathrsfs}

\usepackage{xcolor}

\newtheorem{theorem}{Theorem}

\newtheorem{definition}{Definition}
\newtheorem{proposition}{Proposition}

\begin{document}

\title{\bf A free energy principle for generic quantum systems}

\author{{Chris Fields$^a$\footnote{Corresponding author at: 23 Rue des Lavandi\`{e}res, 11160 Caunes Minervois, FRANCE; {\it E-mail address}: fieldsres@gmail.com}, Karl Friston$^b$, James F. Glazebrook$^c,^d$ and Michael Levin$^e$}\\ \\
{\it$^a$ 23 Rue des Lavandi\`{e}res, 11160 Caunes Minervois, FRANCE}\\
{\it$^b$ Wellcome Centre for Human Neuroimaging, University College London,} \\
{\it London, WC1N 3AR, UK} \\
{\it$^c$ Department of Mathematics and Computer Science,} \\
{\it Eastern Illinois University, Charleston, IL 61920 USA} \\
{\it$^d$ Adjunct Faculty, Department of Mathematics,}\\
{\it University of Illinois at Urbana-Champaign, Urbana, IL 61801 USA}\\
{\it$^e$ Allen Discovery Center at Tufts University, Medford, MA 02155 USA}
}

\maketitle

{\bf Abstract} \\
The Free Energy Principle (FEP) states that under suitable conditions of weak coupling,  random dynamical systems with sufficient degrees of freedom will behave so as to minimize an upper bound, formalized as a variational free energy, on surprisal (a.k.a., self-information).  This upper bound can be read as a Bayesian prediction error. Equivalently, its negative is a lower bound on Bayesian model evidence (a.k.a., marginal likelihood). In short, certain random dynamical systems evince a kind of self-evidencing. Here, we reformulate the FEP in the formal setting of spacetime-background free, scale-free quantum information theory.  We show how generic quantum systems can be regarded as observers, which with the standard freedom of choice assumption become agents capable of assigning semantics to observational outcomes.  We show how such agents minimize Bayesian prediction error in environments characterized by uncertainty, insufficient learning, and quantum contextuality.  We show that in its quantum-theoretic formulation, the FEP is asymptotically equivalent to the Principle of Unitarity.  Based on these results, we suggest that biological systems employ quantum coherence as a computational resource and -- implicitly -- as a communication resource.  We summarize a number of problems for future research, particularly involving the resources required for classical communication and for detecting and responding to quantum context switches. 

\tableofcontents

\section{Introduction}

Since its introduction as a theory of brain function \cite{friston:05, friston:06, friston:07, friston:10}, the variational Free Energy Principle (FEP) has been extended into an explanatory framework for living systems at all scales \cite{friston:13, friston:17, ramstead:18, ramstead:19, kuchling:20}, and shown to characterize, in its most general form, all random dynamical systems that remain measurable, and hence identifiable as persistent and separable entities, over macroscopic times \cite{friston:19}.  To summarize, it is shown in \cite{friston:19} that any system that has a non-equilibrium steady state (NESS) solution to its density dynamics i) possesses an internal dynamics that is conditionally independent of the dynamics of its environment, and ii) will continuously ``self-evidence'' by returning its state to (the vicinity of) its NESS.  The FEP is the statement that any measurable, i.e. bounded and macroscopically persistent, system will behave so as to satisfy these requirements.  Self-organization, the FEP tells us, is not a rare, special case, but a ubiquitous feature of physical systems with sufficient dynamical stability to be called ``things.''  All ``things,'' in particular, self-organize Markov blankets (MBs; see \cite{clark:17} for an informal review) comprising ``sensory'' states that encode incoming information and thus mediate the influence of external states on internal states, and ``active'' states that encode outgoing information and thus mediate the influence of internal states on external states.  This partitioning of ``things'' into internal and MB states means that every ``thing'' can be construed as a certain kind of ``particle'' -- a particle that is in open exchange with external states via its MB.  In short, the MB of any such ``particle'' underwrites conditional independence between its internal states and the external states of its environment by localizing and thereby restricting information exchange; hence, the MB can be viewed as separating internal from external states, while mediating their exchange.

As noted in \cite{friston:19}, generalizing the FEP to characterize the behavior of all ``things'' substantialy weakens the traditional distinction between ``cognition'' and merely ``physical'' dynamics, and hence weakens the even deeper, pretheoretical \cite{scholl:00} distinction between ``agents'' and mere ``objects'' \cite{fgl:21}.  Treating physical interaction as information exchange -- ``observation'' of the environment followed by ``action'' upon it -- redescribes the ``mechanical'' process of returning to the NESS -- as a random global (a.k.a., pullback) attractor -- in terms of inference. This can be read as ``active inference'' \cite{friston:10, friston:13, friston:17, friston:19} in which the existential integrity of the MB, and hence of the ``self'' -- ``world'' distinction \cite{levin:20}, can be maintained in the face of environmental fluctuations by changing internal states (via sensory states: c.f., perception) or changing external states (via active states: c.f., action).  It refocuses the discussion, in other words, from abstract trajectories (or flows) in state space to the MB itself as a concrete locus of ``identity'' in the form of persistent measurability, and hence of ``self-evidencing'' via active inference to maintain that identity.

The idea that all physical systems, including the environment at large, can be considered ``observers'' that also act on their surroundings to ``prepare'' them for subsequent observations has become commonplace in quantum theory, largely replacing the ``wave-function collapse'' postulate of traditional quantum mechanics \cite{vonN:55} with interaction-induced decoherence (i.e., dissipation of quantum coherence) as the generator of classical information \cite{omnes:92, zurek:98, zurek:03, schloss:03, schloss:07}.\footnote{Variants of quantum theory that postulate a physical collapse mechanism also invoke interaction, e.g. with gravity or a ``noise'' field, to generate classicality; see \cite{bassi:13} for review.  We will not discuss here the question of ``interpretations'' of quantum theory; see \cite{landsman:07} for a thorough review and \cite{cabello:15} for a more recent taxonomy highlighting fundamental assumptions.  Both decoherence and entanglement, in particular, have different physical meanings in different interpretations, although their mathematical representations and observable effects are interpretation-independent.  Our general approach can be viewed as replacing the ``measurement problem'' -- that such interpretations are designed to solve -- with an explicit, quantum-informational theory of measurement.  This theory is, we show, just the theory of the FEP.}  Indeed while quantum theory was originally developed -- and is still widely regarded -- as a theory specifically applicable at the atomic scale and below, since the pioneering work of Wheeler \cite{wheeler:83}, Feynman \cite{feynman:82}, and Deutsch \cite{deutsch:85}, it has, over the past few decades, been reformulated as a scale-free information theory \cite{nielsen:00, hardy:01, fuchs:03, brassard:05, chiribella:11, masanes:11} and is increasingly viewed as a theory of the process of observation itself \cite{fuchs:13, grinbaum:17, mermin:18, muller:20, fg:20, fgm:21}.  This newer understanding of quantum theory fits comfortably with the generalization of the FEP, and hence of self-evidencing and active inference, to all ``things'' as outlined in \cite{friston:19}, and with the general view of observation under uncertainty as inference.

In what follows, we take the natural next step from \cite{friston:19}, formulating the FEP as a generic principle of quantum information theory.  We show, in particular, that the FEP emerges naturally in any setting in which an ``agent" or ``particle'' deploys quantum reference frames (QRFs), namely, physical systems that give observational outcomes an operational semantics \cite{aharonov:84, bartlett:07}, to identify and characterize the states of other systems in its environment.  This reformulation removes two central assumptions of the formulation in terms of random dynamical systems employed in \cite{friston:19}: the assumption of a spacetime embedding (or ``background'' in quantum-theoretic language) and the assumption of ``objective'' or observer-independent randomness.  It further reveals a deep relationship between the ideas of local ergodicity and system identifiability, and hence the idea of ``thingness'' highlighted in \cite{friston:19}, and the quantum-theoretic idea of separability, i.e., the absence of quantum entanglement, between physical systems. 

Any quantum system that can be distinguished from its environment over time can, therefore, be regarded as self-organizing and self-evidencing as described in \cite{friston:19}.  We then show that when the FEP is taken to an asymptotic limit, it drives systems away from separability towards entanglement, and hence towards a supraclassical statistical coupling between each ``thing'' and its environment -- between the observer and the observed.  In this, the FEP reproduces the Principle of Unitary, i.e. the Principle of Conservation of Information, which similarly drives all interacting systems asymptotically toward entanglement.  Hence the FEP is, in an important sense, an alternative statement of the Principle of Unitarity, the most fundamental principle of quantum theory.  It therefore applies to a much broader array of systems than would fall under an intuitive idea of ``thingness,'' e.g. to quantum fields, and applies in principle from the Planck scale to cosmological scales.  Formulating the FEP as a generic principle of quantum information theory thus substantially expands the range of systems to which ``cognitive'' or information-processing concepts reasonably apply.

We begin by reviewing in \S \ref{2} the basic principles of quantum theory from an information-theoretic perspective, limiting the formalism to focus on the physical meaning of the theory.  Using the category-theoretic \cite{adamek:04, awodey:10} formalism of Channel Theory -- developed by Barwise and Seligman \cite{barwise:97} to formalize the operational semantics of natural languages -- we develop a generic formal representation of QRFs and show how the noncommutativity of QRFs induces quantum contextuality \cite{kochen:67, mermin:93}, a nonclassical effect demonstrating the presence of entanglement between distinct physical degrees of freedom.  We develop in \S \ref{3} a generic, formal description of how one quantum system identifies another quantum system as a persistent entity -- a ``thing'' -- and measures, records, and compares its states by deploying specific sequences of QRFs.  This identification and measurement process depends critically on breaking thermodynamic symmetries, and therefore on system-specific flows of energy.  These sections together provide a representation of generic quantum systems as observers, or in the language of Gell-Mann and Hartle \cite{gmh:89} ``information gathering and using systems'' (IGUSs), that is free of scale and spacetime embedding (i.e. ``background'') dependent assumptions. It also treats all probabilities as observer-relative.  We then show in \S \ref{FEP} how the FEP emerges in this setting and analyze its asymptotic behavior; in particular, we consider how the FEP addresses the fundamental problem posed by quantum context switches.  We conclude in \S\ref{disc} by discussing the relevance of these results to a scale-independent understanding of biological systems as ``particles'' that interact with other ``particles,'' whether these are other organisms, ``objects,'' or an undifferentiated environment.  We consider in particular the circumstances in which this ``particle'' nature can break down, and suggest that well-designed experiments may be expected to detect quantum context switches or violations of the Bell \cite{bell:64} or Leggett-Garg \cite{emary:13} inequalities, any of which indicate entanglement, by macroscopic biological systems under ordinary conditions.

\section{Physical interaction as information exchange} \label{2}

\subsection{What is ``quantum''?}

When physical interaction is viewed as information exchange, why it is ``quantum'' becomes obvious: the fundamental quantum of information is one bit, one unit of entropy, that one system exchanges with another.  One bit, one quantum of information, is the answer to one yes/no question.  Planck's quantum of action $\hbar$ is then naturally regarded as the action (energy $\cdot$ time) required to obtain one bit via any physical interaction.  The energy required to irreversibly obtain one bit, i.e., to receive and irreversibly record one bit, is given by Landauer's Principle as $\ln 2~k_B T$, with $k_B$ Boltzmann's constant and $T$ temperature \cite{landauer:61, landauer:99, bennett:82}.  The (minimum) time to irreversibly obtain one bit is then $\hbar /\ln2~k_B T$, roughly 30 fs at 310 K.  For comparison, the thermal dissipation time (in 3d space) due to time-energy uncertainty is $\pi \hbar /2\ln 2~k_B T$ \cite{lloyd:00}, roughly 50 fs at 310 K.  These values define a minimal timescale for biologically-relevant, irreversible information processing, roughly the timescale of molecular-bond vibrational modes \cite{zwier:10} and an order of magnitude shorter than photon-capture timescales \cite{wang:94}.

Viewing all physical interaction as information exchange -- and the bit as the fundamental ``quantum'' of information -- has the immediate consequence that interaction discretizes the state spaces of all observable degrees of freedom.  Given a set of mutually-commuting, binary-valued observables, i.e., quantum operators implementing yes/no questions as described below, a discrete state space for any observed system is constructed by assigning a basis vector to each possible outcome (yes or no, +1 or -1) for each observable.  These are finite-dimensional Hilbert spaces.  While Hilbert spaces with either finite or infinite dimension are introduced {\em ad hoc} in traditional quantum mechanics \cite{vonN:55}, finite-dimensional Hilbert spaces emerge naturally from the process of observation in the information theory.  As Fuchs puts it, infinite dimensional Hilbert spaces are, from an information perspective, merely a ``useful artifice'' permitting computation with differential equations \cite{fuchs:10}.  Hence in what follows, all Hilbert space dimensions will be finite dimensional.

\subsection{Unitarity}

The fundamental axiom of quantum theory is unitarity, again introduced {\em ad hoc} in traditional treatments.  If $U$ is an isolated system, its time propagator is a unitary operator:

\begin{equation} \label{prop}
\mathcal{P}_U = e^{- (\imath / \hbar) H_U t},
\end{equation}
\noindent
where $H_U$ is the ``internal'' Hamiltonian (i.e. energy) operator satisfying the Schr\"{o}dinger equation:

\begin{equation} \label{Schrodinger}
\imath \hbar (\partial / \partial t) \vert U(t) \rangle = H_U \vert U(t) \rangle,
\end{equation}
\noindent
where $|U(t) \rangle$ is the time-dependent state of $U$.  When $H_U$ and hence $\mathcal{P}_U$ are time-invariant, solutions have the form:

\begin{equation} \label{phase-rotation}
\vert U(t) \rangle = e^{- \imath \varphi t} \vert U(0) \rangle,
\end{equation}
\noindent
where $\vert U(0) \rangle$ is an initial ($t = 0$) state.  This Eq. \eqref{phase-rotation} describes a phase rotation by $\varphi$ per unit time $t$ in the Hilbert space $\mathcal{H}_U$; the initial state $\vert U(0) \rangle$ is preserved ``up to'' this phase rotation.  The dimension $d = \dim(\mathcal{H}_U)$ is the number of basis vectors of $\mathcal{H}_U$ and is, by definition, the quantity (in bits) of {\em observable} information encoded by the state $\vert U(0) \rangle$.  The number $d$ is clearly invariant under the phase rotation given by Eq. \eqref{phase-rotation}; hence the phase rotation is not an observable.  Unitary evolution as defined by Eq. \eqref{prop} is, therefore, simply evolution over time that conserves observable information. Hence, the fundamental axiom of quantum theory is not {\em ad hoc} at all: it is the Principle that observable information, like energy, is neither created nor destroyed by physical processes.

The appearance of the ``imaginary'' unit $\imath$ in Eq. \eqref{prop} - \eqref{phase-rotation} and the use of Hilbert spaces over the complex field $\mathbb{C}$ are standard in quantum theory but are often considered a mere convenience; compelling arguments for their necessity have only been developed recently \cite{moretti:17, renou:21}.  They can, however, be given a straightforward interpretation: they emphasize that the phase rotation implemented by $\mathcal{P}_U$ is not an observable dynamics and that the ``external'' time $t$ in these equations is not an observable, clock-referenced time.\footnote{Imagine, for an example, listening to a constant tone that has no beginning or end.  Writing time as $\imath t$ with $t$ real allows Minkowski spacetime to be given a Galilean (++++) metric.  See \cite{baez:20} for a discussion of $\imath$ as a shorthand for converting classical to quantum observables, and \cite{kauffman:11} for a more intuitive view of ``rotation by $\imath$'' as a formal operation.}  Indeed, no classical information -- no observational outcome -- has yet been obtained in the setting defined by Eq. \eqref{prop} - \eqref{phase-rotation}.  Characterizing observation as a process generating classical outcomes as a result requires an additional assumption of separability as outlined below.

\subsection{Separability and holographic encoding}\label{holographic-1}

Let $U$ be an isolated system as above, and let $U = AB$ be a bipartite decomposition of $U$, i.e. the Hilbert space $\mathcal{H}_U = \mathcal{H}_A \otimes \mathcal{H}_B$.  At a fixed time $t$, any such bipartite decomposition can be characterized by an entanglement entropy:

\begin{equation} \label{entanglement-entropy}
\mathcal{S} (\vert AB \rangle) = - \sum_i \vert \alpha_i \vert^2 \log_2 (\vert \alpha_i \vert^2 ),
\end{equation}
\noindent
where the coefficients $\alpha_i$ are the Schmidt coefficients given by:

\begin{equation} \label{schmidt}
\vert AB \rangle = \sum_i^m \alpha_i \vert u_i \rangle_A \vert v_i \rangle_B,
\end{equation}
\noindent
where the label $B$ is assigned so that $m = \dim(\mathcal{H}_B ) \leq \dim(\mathcal{H}_A )$ and the $\vert u_i \rangle_A$ and $\vert v_i \rangle_B$ are orthonormal states of $A$ and $B$ respectively.  The entanglement entropy $\mathcal{S} (\vert AB \rangle)$ is a mutual information measure that detects quantum correlation or ``coherence'' between $A$ and $B$.  If $\mathcal{S} (\vert AB \rangle) = 0$, the joint state factors as $\vert AB \rangle = \vert A \rangle \vert B \rangle$ and hence is {\em separable}; otherwise the joint state is {\em entangled}.  Separable states are also called {\em decoherent}; entangled states are {\em coherent}.  Heuristically, a joint state $|AB \rangle$ is separable if the individual states $|A \rangle$ and $|B \rangle$ can be determined by independent measurements.  If $|AB \rangle$ is entangled, interactions with $A$ and $B$, including measurements, are no longer independent; this lack of independence can be detected \cite{bell:64} and is the empirical basis for demonstrating entanglement \cite{aspect:82} as discussed in \S\ref{pred} below.  

Extending Eq. \eqref{entanglement-entropy} to all times and all bipartite decompositions of $U$ gives a representation of the time-dependent entanglement structure of $U$.  As $t \rightarrow \infty$ the unitary dynamics of Eq. \eqref{prop} will drive $U$ toward maximal entanglement, i.e. $\mathcal{S} (\vert AB \rangle) \rightarrow \dim(\mathcal{H}_B )$ for every bipartite decomposition $U = AB$; where, here again, $B$ is taken to be the smaller of the two systems.  At maximum entanglement, the joint-state evolution can be considered a simple rotation as in Eq. \eqref{phase-rotation}.

Given a decomposition $U = AB$, the Hamiltonian can be decomposed as $H_U = H_A + H_B + H_{AB}$, where $H_{AB}$ represents the $A$-$B$ interaction.  Provided $H_{AB}$ is weak compared to the internal interactions $H_A$ and $H_B$ and the time period of interest is short compared to timescale in which $\mathcal{P}_U$ drives the joint system to maximal entanglement, $A$ and $B$ can be considered at least approximately separable, i.e. $\mathcal{S} (\vert AB \rangle) \approx 0$ so $\vert AB \rangle \approx \vert A \rangle \vert B \rangle$.  In this case, $\vert A \rangle$ and $\vert B \rangle$ can be considered individually well-defined, and bases can be chosen for $A$ and $B$ so that:

\begin{equation} \label{ham}
H_{AB} = \beta^k k_B T^k \sum_i^N \alpha^k_i M^k_i,
\end{equation}
\noindent
where $k =~A$ or $B$, the $M^k_i$ are $N$ Hermitian operators with eigenvalues in $\{ -1,1 \}$, the $\alpha^k_i \in [0,1]$ are such that $\sum^N_i \alpha^k_i = 1$, and $\beta^k \geq \ln 2$ is an inverse measure of $k$'s thermodynamic efficiency that depends on the internal dynamics $H_k$ \cite{fg:20, fgm:21, fm:19, fm:20}.  For fixed $k$, the operators $M^k_i$ clearly must commute, i.e. $[M^k_i, M^k_j] = M^k_i M^k_j - M^k_j M^k_i = 0$ for all $i, j$; hence $H_{AB}$ is swap-symmetric under the permutation group $S_N$ for each $k$.  The thermodynamic factor $\beta^k k_B T^k$ in Eq. \eqref{ham} assures compliance with Landauer's Principle, i.e., assures that the per-bit free-energy cost of classical bit erasure is paid on each cycle (see \cite{fg:20, fgm:21} for discussion).  As $U = AB$ is by assumption isolated, conservation of energy requires $\beta^A T^A = \beta^B T^B$.  As discussed in \cite{fgl:21}, Eq. \eqref{ham} can be written in ordinary narrative form as:

\begin{equation*}
\mathrm{Physical ~Interaction ~=~ (Thermodynamics) \cdot (Yes/No ~questions)}.
\end{equation*}
\noindent
This formulation emphasizes what quantum theory is about: the process of obtaining information.  Obtaining information from $B$ requires, in particular, that $A$ acts on $B$ by asking questions.  As Wheeler \cite{wheeler:89} puts it, ``No question?  No Answer!"  All inference in this framework is active inference; Eq. \eqref{ham} does not allow ``passive perception'' to be a physical process.

In contrast to classical theories of information transfer, in which physical tokens encoding bits are transmitted between observers at different locations, Eq. \eqref{ham} involves no assumptions about spacetime, objects, or motions.  It is strictly topological: given separability, it identifies a boundary $\mathscr{B}$ between $A$ and $B$ at which $H_{AB}$ is defined.  This boundary is the ``channel'' via which $A$ and $B$ exchange strings of bits, ordered by the order of the operators $M^k_i$ in Eq. \eqref{ham}.  Each of these bit strings encodes one eigenvalue of $H_{AB}$; as $H_{AB}$ has units of energy, these eigenvalues are measures of the energy $\beta^k k_B T^k$ exchanged between $A$ and $B$ during each cycle of interaction.  Any time variation of $H_{AB}$ is, therefore, time variation of the energy exchanged and can be written:
\begin{equation} \label{ham-time}
H_{AB} (t) = \beta^k (t) k_B T^k \sum_i^N \alpha^k_i (t) M^k_i,
\end{equation}
\noindent
where $\alpha^k_i (t)$ and $\beta^k (t)$ are subject to the conditions on $\alpha^k_i$ and $\beta^k$ given above.  As noted for Eq. \eqref{prop} - \eqref{phase-rotation} above, the time $t$ in Eq. \eqref{ham-time} is not an observable, clock-referenced time.  We will for simplicity consider $H_{AB}$ to be $t$-invariant; this is effectively an adiabatic assumption.  An observer-specific, measurable time will be introduced in \S\ref{clock} below.

Boundaries such as $\mathscr{B}$ that function as information channels are, when embedded in spacetime and constrained by general covariance,\footnote{Effectively, this is a requirement for consistency with both Special and General Relativity.} {\em holographic screens} that limit communication between the regions they separate to the bits that they encode \cite{hooft:83, susskind:95, bousso:02, almheiri:21}.  Interactions between separable systems, i.e. interactions of the form given by Eq. \eqref{ham} or \eqref{ham-time} can, without loss of generality, be regarded as defined at such holographic screens \cite{fgm:21, fm:20, addazi:21}.  The operators $M^k_i$ can, in this case, be regarded as ``preparation'' and ``measurement'' operators that alternately write and read bit values encoded on $\mathscr{B}$.  The $S_N$ swap symmetry of the $M^k_i$ for each $k$ means that the bits can be prepared, and then measured, in any order and hence independently, provided preparation precedes measurement for each bit (such swaps change the ``zero point'' of the energy scale and are undetectable).  The screen $\mathscr{B}$ is, as is any holographic screen, an ancillary construct, not ``part of'' either $A$ or $B$.  It can be physically realized as an ancillary array of noninteracting qubits (quantum bits) with which $A$ and $B$ interact as illustrated in Fig. \ref{qubit-screen-fig}.

\begin{figure}[H]
\centering
\includegraphics[width=13 cm]{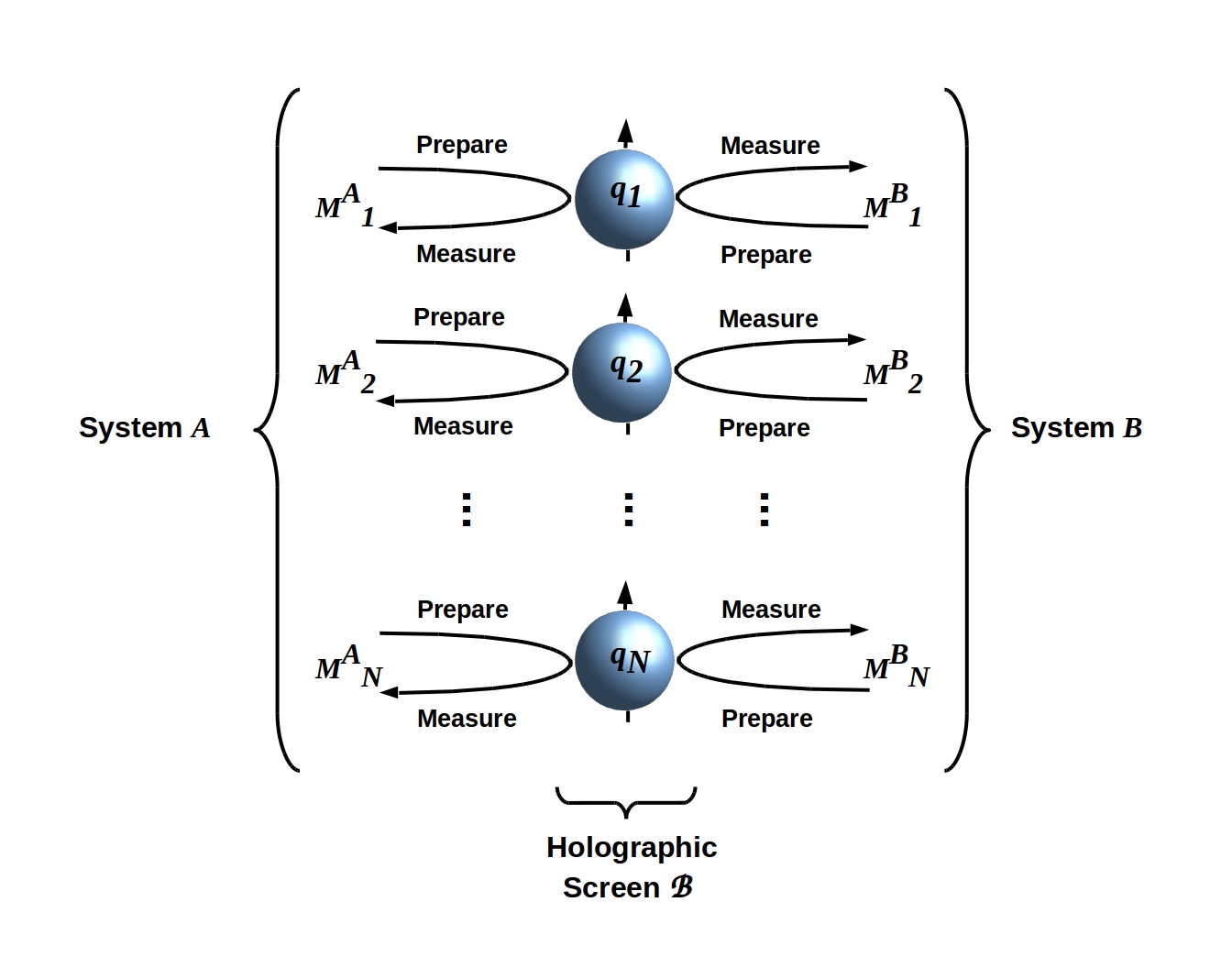}
\caption{A holographic screen $\mathscr{B}$ separating systems $A$ and $B$ with an interaction $H_{AB}$ given by Eq. \eqref{ham} can be realized by an ancillary array of noninteracting qubits that are alternately prepared by $A$ ($B$) and then measured by $B$ ($A$).  Qubits are depicted as Bloch spheres \cite{nielsen:00}.  There is no requirement that $A$ and $B$ share preparation and measurement bases, i.e. QRFs.  Adapted from \cite{fgm:21} Fig. 1, CC-BY license.}
\label{qubit-screen-fig}
\end{figure}

The holographic screen $\mathscr{B}$ has an obvious interpretation in the language of the FEP and active inference: it implements the MB separating $A$ from $B$ \cite{fm:20b}.  ``Active'' and ``sensory'' states of the MB correspond to preparation and measurement implemented by the $M^k_i$; as the interaction $H_{AB}$ is perfectly symmetrical, $A$'s actions are $B$'s sensations and vice-versa.  As bit strings on $\mathscr{B}$ encode energy eigenvalues, what either $A$ or $B$ ``senses'' is energy.  The assumption of separability plays, in this setting, the role played by the assumption of measurability in \cite{friston:19}: it guarantees that systems $A$ and $B$ have well-defined individual states.  These states are at {\em informational} equilibrium; $A$ and $B$ exchange bits one-for-one.  They are not, however, at thermal equilibrium, $T^A = T^B$ unless their thermodynamic efficiencies $\beta^A = \beta^B$.  Hence, when sampled and averaged over macroscopic times, the pure states $\vert A \rangle$ and $\vert B \rangle$ become (mixed) NESS densities $\rho_A$ and $\rho_B$.

\subsection{Reference frames} \label{rf1}

The isolation of $U = AB$ assures that the $A - B$ channel is free of classical, environmentally-induced noise. The preparation and measurement operations $M^A_i$ and $M^B_i$ that $A$ and $B$ use to communicate, however, are defined with respect to and hence depend on the bases $\vert u_i \rangle$ and $\vert v_i \rangle$ of the Hilbert spaces $\mathcal{H}_A$ and $\mathcal{H}_B$ respectively.  In the qubit realization shown in Fig. \ref{qubit-screen-fig}, the $M^A_i$ and $M^B_i$ are $z$-spin operators and so depend on the ``choices'' of $z$ axis $z_A$ and $z_B$.  Provided $A$ and $B$ are separable, and assuming that there are no ``superdeterminist'' {\em a priori} correlations, $z_A$ and $z_B$ are uncorrelated; this is the ``free choice'' assumption.  Free choice is often claimed to be essential to science as a practice \cite{bohr:58, gisin:12}; if it characterizes any bounded system, consistency with quantum theory and special relativity together requires that it must characterize all such systems \cite{conway:09}.  Free choice of a basis introduces quantum ``noise'' that is indistinguishable, observationally, from classical noise.  Suppose $B$ encodes $\vert \uparrow \rangle$ on qubit $q$, using $z_B$ to define the ``up'' direction.  If $z_A = z_B$, $A$'s measured state $\vert q \rangle_A = \vert \uparrow \rangle$, i.e. $A$'s probability $P_A (\vert \uparrow \rangle) = 1$.  If, however, $z_A$ is chosen perpendicular to $z_B$, $\vert q \rangle_A = (\vert \uparrow \rangle - \vert \downarrow \rangle) / \sqrt{2}$ and $P_A (\vert \uparrow \rangle) = P_A (\vert \downarrow \rangle) = 0.5$.

The $z$ axis in Fig. \ref{qubit-screen-fig} is a {\em reference frame}; free choice of the $z$ axis generalizes to free choice of the reference frame for encoding each qubit on $\mathscr{B}$.  While in classical physics, reference frames are typically thought of as fully-specifiable abstractions, in quantum theory reference frames must be considered physical systems -- QRFs -- that encode unmeasurable quantum phase information.  A QRF cannot, therefore, be fully specified by any finite bit string; it is ``nonfungible'' in the terminology of \cite{bartlett:07}.  In the setting of Fig. \ref{qubit-screen-fig}, the $z$ axis deployed by a $A$ ($B$) determines how $A$ ($B$) prepares and then measures the states of the qubits composing $\mathscr{B}$.  We can, in particular, consider $A$ to be isolated from $B$ during the time interval ``between'' preparation and measurement steps, an interval shorter than the natural timescale of the interaction $H_{AB}$.  During such a short interval, we abuse the notation only slightly by writing:

\begin{equation} \label{compute}
\mathcal{P}_k: \vert \mathscr{B} \rangle_{meas} \mapsto \vert \mathscr{B} \rangle_{prep}
\end{equation}
\noindent
where $k = ~A$ or $B$, i.e. by thinking of the internal propagators $\mathcal{P}_A$ and $\mathcal{P}_B$, and hence the dynamics $H_A$ and $H_B$, as computing the next preparation of the qubits on $\mathscr{B}$ from their most recent measured state.  The idea of ``computation'' -- or more properly, of the physical system $A$ ($B$) {\em implementing} a computation -- is the standard notion, reviewed in \cite{horsman:14}: in brief, we can say a system $A$ implements a computation of a function $\mathbf{F}$ if an ``interpretation'' map $\mathcal{I}_{\mathbf{F}}$ exists such that the diagram:

\begin{equation} \label{computation}
\begin{gathered}
\xymatrix@!C=4pc{\langle \mathrm{bits} \rangle_i \ar[r]^{\mathbf{F}} & \langle \mathrm{bits} \rangle_{i+1} \\
\vert \mathscr{B} \rangle_{i} \ar[u]^{\mathcal{I}_{\mathbf{F}}} \ar[r]_{\mathcal{P}_A}  & \vert \mathscr{B} \rangle_{i+1} \ar[u]_{\mathcal{I}_{\mathbf{F}}}}
\end{gathered}
\end{equation}
\noindent
commutes, i.e. $\mathbf{F} ~\mathcal{I}_{\mathbf{F}} = \mathcal{I}_{\mathbf{F}} \mathcal{P}_A$ for every ``step'' $i \rightarrow i+1$, where here $\langle \mathrm{bits} \rangle$ is a finite bit string and $i \rightarrow i+1$ generalizes $meas \rightarrow ~prep$ in Eq. \eqref{compute}.  A QRF, e.g. a $z$ axis as in Fig. \ref{qubit-screen-fig}, implemented by $\mathcal{P}_A$ effectively ``chooses'' the computational basis that renders the string $\langle \mathrm{bits} \rangle$ well-defined. This is the same choice of basis that renders the $M^A_i$ well-defined in Eq. \eqref{ham}.  Hence, we can identify the interpretation map $\mathcal{I}_{\mathbf{F}}$ with the QRF that renders the function $\mathbf{F}$ well-defined; to simplify the notation, we will use boldface $\mathbf{F}$ to label the QRF itself.  The choice of QRF $\mathbf{F}$ and hence of computational basis is {\em functional} or {\em semantic}, not physical; both $H_A$ and $H_{AB}$ are invariant under changes of basis for $\mathcal{H}_A$.  It is worth noting explicitly that the choice of $\mathbf{F}$ is not encoded on $\mathscr{B}$ and is not observationally accessible to $B$.  Indeed, the general question of whether an arbitrary system $A$ implements a QRF $\mathbf{F}$ is Turing undecidable by Rice's theorem \cite{rice:53}; see \cite{fgm:21} for discussion.

As illustrated by the $z$ axes in Fig. \ref{qubit-screen-fig}, the role of any QRF in a measurement setting is to assure that measured values of some degree of freedom can be compared with each other and hence given an operational meaning \cite{fl:20}.  Meter sticks, clocks, gyroscopes, and the Earth with its gravitational and magnetic fields serve this function, and are commonplace laboratory QRFs.  As described in detail in \S\ref{systems} below, recognizing an external object, such as a meter stick, as a QRF and using it as such requires that the observer in question already implements a QRF for the relevant degree of freedom: an observer with no ``internal'' ability to measure and compare lengths, for example, would find a meter stick useless.  All QRFs can, therefore, be considered internal functional components of, or computations implemented by, larger systems that allow measurements of, and assign operational semantics to, one or more degrees of freedom external to the implementing system \cite{fm:19}.  Shared operational semantics across observers requires shared QRFs.  The nonfungibility of QRFs renders the question of QRF sharing in general Turing undecidable, again by Rice's theorem \cite{fgm:21}, a result consistent with the general undecidability of language sharing \cite{quine:60}.

\subsection{Symmetry breaking, decoherence, and agency} \label{breaking}

We will be interested in what follows in quantum systems, considered to be observers, that deploy one or more distinct QRFs to measure particular subsets of the bits encoded on their boundaries/MBs, and that record the values of these bits to a memory that persists for at least one measurement cycle and hence allows comparisons of the values obtained in at least two sequential measurements.  Such systems effectively decompose the holographic screen $\mathscr{B}$ into three disjoint sectors that we will label $E$ the {\em observed environment}, $F$ the {\em unobserved environment}, and $Y$ the {\em memory} sector, respectively.  Equivalently, they effectively decompose the set $M^k_i$ of operators into three disjoint subsets (dropping the redundant index $k$) $M^E_i$, $M^F_j$, and $M^Y_l$, respectively, with:

\begin{equation} \label{sector}
X =_{def} {\rm{dom}}(\{ M^X_i \})
\end{equation}
\noindent
for each sector $X$.  Assuming free choice of basis as above, this decomposition into sectors can be regarded as freely chosen.  Note that as $\mathscr{B}$ is the only locus of classical information in the current formalism, any persistent classical memory {\em must} be a sector on $\mathscr{B}$ as discussed in \cite{fgl:21}.

The assignment of operators to specific sectors breaks the $S_N$ swap symmetry of $\mathscr{B}$ \cite{fgm:21}.  As with choice of QRF, this is a functional or semantic symmetry breaking, not a physical symmetry breaking; the assignment of bits on $\mathscr{B}$ to distinct sectors by $H_A$ has no effect on the definition of $H_{AB}$ given by Eq. \eqref{ham}.  Holding $H_{AB}$ fixed, we can vary the internal interaction $H_A$, and hence the implementation of QRFs as computations, so as to either satisfy $S_N$ symmetry or break it.  Note that varying $H_A$ in this way is equivalent to varying $H_U - H_B$; such variation is undetectable at the boundary $\mathscr{B}$.  This insensitivity of $\mathscr{B}$ to variation in the internal or joint dynamics of $A$ and $B$ is the core physical meaning of the holographic principle \cite{hooft:83, susskind:95, bousso:02, almheiri:21}.

Breaking the $S_N$ swap symmetry on $\mathscr{B}$ renders the sector states $\vert E \rangle$, $\vert F \rangle$, and $\vert Y \rangle$, each of which corresponds to an encoded bit string, separable and hence mutually decoherent \cite{fgm:21}.  As emphasized in \cite{zanardi:01, zanardi:04, dugic:06, dugic:08, torre:10, harshman:11} among others, decoherence is always observer-relative, even when the ``observer'' is an ambient environment.  Hence decoherence due to swap-symmetry breaking is decoherence {\em relative to} $A$; the change-of-basis invariance of $H_{AB}$ renders $A$'s sector boundaries undetectable by $B$.  Given free choice of QRFs, moreover, $B$'s sector boundaries, if any, may be different from $A$'s.  Mismatched sector boundaries between $A$ and $B$ generate apparent ``hidden variables'' and hence variational free energy \cite{friston:10, friston:13, friston:17, friston:19} as discussed in \S\ref{sources} below.  For the present purposes, what is important is that deploying a QRF sensitive to only some degrees of freedom of $\mathscr{B}$ (i.e. sensitive to only some bit values encoded on $\mathscr{B}$) breaks the swap symmetry on $\mathscr{B}$ and creates a decoherent sector as defined by Eq. \eqref{sector}.

Breaking the swap symmetry on $\mathscr{B}$ allows different sectors to have different thermodynamic efficiences, i.e. breaking swap symmetry allows breaking thermodynamic symmetry.  This can be made evident by rewriting Eq. \eqref{ham} for $A$ in terms of sectors $X$ as:
\begin{equation} \label{ham-sector}
H_{AB} = \sum_X \beta^X k_B T^A \sum_{i \in i_X } \alpha^A_i M^A_i,
\end{equation}
\noindent
where $i_X$ is the set of indices of operators $M^A_i$ assigned to sector $X$.  Note that this symmetry breaking does not change the total energy exchanged (i.e. the eigenvalue of $H_{AB}$), it only allocates the energy differently to different sectors.  This allows sectors to perform different thermodynamic roles; in particular, it allows the unobserved environment sector $F$ to serve as a source of free energy to -- and a sink of waste heat from -- the observed environment sector $E$ and the memory sector $Y$ \cite{fg:20, fgm:21}.  The bits encoded on $F$ are, therefore, noninformative to $A$ and are traced (effectively, marginalised and averaged) over when defining information-bearing states of $A$.  Mathematically, this trace operation can be viewed as implementing decoherence as discussed in \cite{omnes:92, zurek:98, zurek:03, schloss:03, schloss:07}, i.e., as implementing the sector boundary of $F$.  Assigning this thermodynamic function to $F$ enables thermodynamically-expensive classical information processing of bits encoded on $E$ and $Y$, including maintaining the stability of bit values written to $Y$ as discussed in \S\ref{clock} below.  It therefore allows $E$ and $Y$ to have distinct semantics. 

Free choice of a decomposition of $\mathscr{B}$ into sectors with different QRF-induced semantics is indicative of agency.  Hence, we can define:

\begin{definition} \label{agent-def}
A (nontrivial) {\em agent} is a system $A$ with an internal dynamics $H_A$ that breaks the $S_N$ swap symmetry of its boundary/MB $\mathscr{B}$.
\end{definition}
\noindent
We can consider a system with a swap-symmetric boundary as shown in Fig. \ref{qubit-screen-fig} a {\em trivial} agent.  Trivial agents correspond to ``inert'' systems with functionally-insignificant internal states, and hence no internal information-processing capacity, as discussed in \cite{friston:19}.  All nontrivial agents are ``cognitive'' systems that engage in active inference.

Before proceeding to further characterize these sectors or to consider the imposition of additional structure on $E$ by further QRFs, we briefly review below the specification of QRFs using the generic formalism of Channel Theory \cite{barwise:97}.  This formalism will enable characterization of the asymptotic behavior of the FEP and of context-switching as a mechanism for minimizing FEP as discussed below in \S\ref{asymptotic} below.

\subsection{Channel theory of QRFs} \label{chan-th}

Channel Theory \cite{barwise:97} is an application of the category theory of Chu spaces, spaces of semantic relations exemplified by object -- attribute tables \cite{barr:79, pratt:99a, pratt:99b}.  Indeed Channel Theory can be regarded as defining a category $\mathbf{Chan}$ that is isomorphic to the category $\mathbf{Chu}(\mathbf{Set},K)$ (for short, $\mathbf{Chu}$) of Chu spaces.   While conceptually simple, Channel Theory is surprisingly rich, providing both a natural representation of conditional probabilities and formal criteria for operator commutativity and quantum contextuality \cite{fg:19a, fg:21} (for a logico-philosophical perspective on the properties of semantic coherence in $\mathbf{Chan}$, see e.g. \cite{collier:11}). Furthermore, it provides an implementation-independent formal language for writing functional specifications of QRFs as computations.

The central idea of Channel Theory is that of a ``classifier'' that relates tokens in some language to types in that language.  We can define a classifier as an object in $\mathbf{Chan}$ as follows:

\begin{definition}\label{class-def-1}
A {\em classifier} $\mathcal{A}$ is a triple $\langle Tok(\mathcal{A}), Typ(\mathcal{A}), \models_{\mathcal{A}} \rangle$ where $Tok(\mathcal{A})$ is a set of ``tokens'', $Typ(\mathcal{A})$ is a set of ``types'', and $\models_{\mathcal{A}}$ is a ``classification'' relating tokens to types.
\end{definition}
\noindent
The classification $\models_{\mathcal{A}}$ can, in general, be valued in any set $K$ without assumed structure (as is the case for Chu spaces); for simplicity, we will consider only binary classifications.  Morphisms in $\mathbf{Chan}$, called ``infomorphisms'' between these objects are then given by the following:

\begin{definition}\label{class-def-2}
Given two classifiers $\mathcal{A} = \langle Tok(\mathcal{A}), Typ(\mathcal{A}), \models_{\mathcal{A}} \rangle$ and $\mathcal{B} = \langle Tok(\mathcal{B}), Typ(\mathcal{B}), \models_{\mathcal{B}} \rangle$, an {\em infomorphism} $f: \mathcal{A} \rightarrow \mathcal{B}$ is a pair of maps $\overrightarrow{f}: Tok(\mathcal{B}) \rightarrow Tok(\mathcal{A})$ and $\overleftarrow{f}: Typ(\mathcal{A}) \rightarrow Typ(\mathcal{B})$ such that $\forall b \in Tok(\mathcal{B})$ and $\forall a \in Typ(\mathcal{A})$, $\overrightarrow{f}(b) \models_{\mathcal{A}} a$ if and only if $b \models_{\mathcal{B}} \overleftarrow{f}(a)$.
\end{definition}
\noindent
This last definition can be represented schematically as the requirement that the following diagram commutes:

\begin{equation}\label{info-diagram-1}
\begin{gathered}
\xymatrix@!C=3pc{\rm{Typ}(\mathcal{A}) \ar[r]^{\overrightarrow{f}}   & \rm{Typ}(\mathcal{B}) \ar@{-}[d]^{\models_{\mathcal{B}}} \\
\rm{Tok}(\mathcal{A}) \ar@{-}[u]^{\models_{\mathcal{A}}}  & \rm{Tok}(\mathcal{B}) \ar[l]_{\overleftarrow{f}}}
\end{gathered}
\end{equation}
\noindent
An infomorphism $f: \mathcal{A} \rightarrow \mathcal{B}$ is, effectively, a map relating the semantic constraints imposed by the classification $\models_{\mathcal{A}}$ to those imposed by $\models_{\mathcal{B}}$.

We are, in practice, interested in collections of infomorphisms than construct complex semantic constraints out of simple ones; these will allow us to specify QRFs as hierarchies of semantic constraints.  Given a finite collection $\mathcal{A}_i$ of classifiers, we can represent this construction process as a finite, commuting {\em cocone diagram} (CCD) depicting a flow of infomorphisms sending inputs to a {\em core} $\mathbf{C^\prime}$ that is the category-theoretic colimit of the underlying classifiers, i.e., is the apex of the maximally general diagram of this form over the  $\mathcal{A}_i$, if a unique such maximum exists:

\begin{equation}\label{ccd-1}
\begin{gathered}
\xymatrix@C=4pc{&\mathbf{C^\prime} &  \\
\mathcal{A}_1 \ar[ur]^{f_1} \ar[r]_{g_{12}} & \mathcal{A}_2 \ar[u]_{f_2} \ar[r]_{g_{23}} & \ldots ~\mathcal{A}_k \ar[ul]_{f_k}
}
\end{gathered}
\end{equation}
\noindent
The cocone core $\mathbf{C^\prime}$ is itself a classifier that encodes, via the incoming infomorphisms $f_i$, the conjunction of the semantic constraints imposed by the $\mathcal{A}_i$.

There is a dual construction to this CCD, namely a commuting finite \emph{cone diagram} (CD) of infomorphisms on the same classifiers, where all arrows are reversed.  In this case the core of the (dual) channel is the category-theoretic limit of all possible downward-going structure-preserving maps to the classifiers $\mathcal{A}_i$.  Hence, we can define the central idea of a finite, commuting {\em cone-cocone diagram} (CCCD) as consisting of both a cone and a cocone on a single finite set of classifiers $\mathcal{A}_i$ linked by infomorphisms as depicted below:

\begin{equation}\label{cccd-1}
\begin{gathered}
\xymatrix@C=6pc{&\mathbf{C^\prime} &  \\
\mathcal{A}_1 \ar[ur]^{f_1} \ar[r]_{g_{12}}^{g_{21}} & \ar[l] \mathcal{A}_2 \ar[u]_{f_2} \ar[r]_{g_{23}}^{g_{32}} & \ar[l] \ldots ~\mathcal{A}_k \ar[ul]_{f_k} \\
&\mathbf{D^\prime} \ar[ul]^{h_1} \ar[u]^{h_2} \ar[ur]_{h_k}&
}
\end{gathered}
\end{equation}
\noindent
If the cores $\mathbf{C^\prime} = \mathbf{D^\prime}$, we can also represent the CCCD as:

\begin{equation}\label{cccd-2}
\begin{gathered}
\xymatrix@C=6pc{\mathcal{A}_1 \ar[r]_{g_{12}}^{g_{21}} & \ar[l] \mathcal{A}_2 \ar[r]_{g_{23}}^{g_{32}} & \ar[l] \ldots ~\mathcal{A}_k \\
&\mathbf{C^\prime} \ar[ul]^{h_1} \ar[u]^{h_2} \ar[ur]_{h_k}& \\
\mathcal{A}_1 \ar[ur]^{f_1} \ar[r]_{g_{12}}^{g_{21}} & \ar[l] \mathcal{A}_2 \ar[u]_{f_2} \ar[r]_{g_{23}}^{g_{32}} & \ar[l] \ldots ~\mathcal{A}_k \ar[ul]_{f_k}
}
\end{gathered}
\end{equation}
\noindent
This diagram is naturally interpreted as reconstructing the semantics of the $\mathcal{A}_i$ via the ``combined'' representation $\mathbf{C^\prime}$.  Generalizing Diagram \eqref{cccd-2} by letting $\mathbf{C^\prime}$ be the limit of a smaller set of classifiers $\mathcal{A}^{\prime}_1, \dots \mathcal{A}^{\prime}_j, ~j < k$, we can write:

\begin{equation}\label{cccd-3}
\begin{gathered}
\xymatrix@C=6pc{\mathcal{A}^{\prime}_1 \ar[r]_{g^{\prime}_{12}}^{g^{\prime}_{21}} & \ar[l] \mathcal{A}^{\prime}_2 \ar[r]_{g^{\prime}_{23}}^{g^{\prime}_{32}} & \ar[l] \ldots ~\mathcal{A}^{\prime}_j \\
&\mathbf{C^\prime} \ar[ul]^{h^{\prime}_1} \ar[u]^{h^{\prime}_2} \ar[ur]_{h^{\prime}_j}& \\
\mathcal{A}_1 \ar[ur]^{f_1} \ar[r]_{g_{12}}^{g_{21}} & \ar[l] \mathcal{A}_2 \ar[u]_{f_2} \ar[r]_{g_{23}}^{g_{32}} & \ar[l] \ldots ~\mathcal{A}_k \ar[ul]_{f_k}
}
\end{gathered}
\end{equation}
\noindent
Diagram \eqref{cccd-3} provides a natural representation of coarse-graining the semantics of $\mathcal{A}_i$ via $\mathbf{C^\prime}$ into a compressed representation $\mathcal{A}^{\prime}_i$.  We will employ this generalization in \S\ref{clock} below to specify the writing of coarse-grained records of observational outcomes to the memory sector $Y$.

Diagrams such as \eqref{ccd-1} -- \eqref{cccd-3} can be generalized into hierarchical networks by adding intermediate layers of classifiers and appropriate maps; when this is done, they clearly resemble artificial neural networks (ANNs), and in the ``bowtie'' form of Diagrams \eqref{cccd-2} and \eqref{cccd-3}, variational autoencoders (VAEs) \cite{fg:19a}.  The core $\mathbf{C^\prime}$ in \eqref{cccd-2} and \eqref{cccd-3} can be viewed as both an ``answer'' computed by the $f_i$ from inputs to the $\mathcal{A}_i$ and, dually, as an ``instruction'' propagated by the $h_i$ (or in Diagram \eqref{cccd-3}, the $h^{\prime}_i$), to drive outputs from the $\mathcal{A}_i$ (or in Diagram \eqref{cccd-3}, the $\mathcal{A}^{\prime}_i$).  Such dual input/output behavior is exactly the behavior of a QRF.  We can, therefore, represent any QRF as a CCCD ``attached'' to a subset of measurement operators $M^A_k, \dots M^A_n$ by maps that identify the binary eigenvalues of the $M^A_i$ with binary inputs to the $\mathcal{A}_i$ as illustrated in Fig. \ref{CCCD-to-screen-fig}.

\begin{figure}[H]
\centering
\includegraphics[width=15 cm]{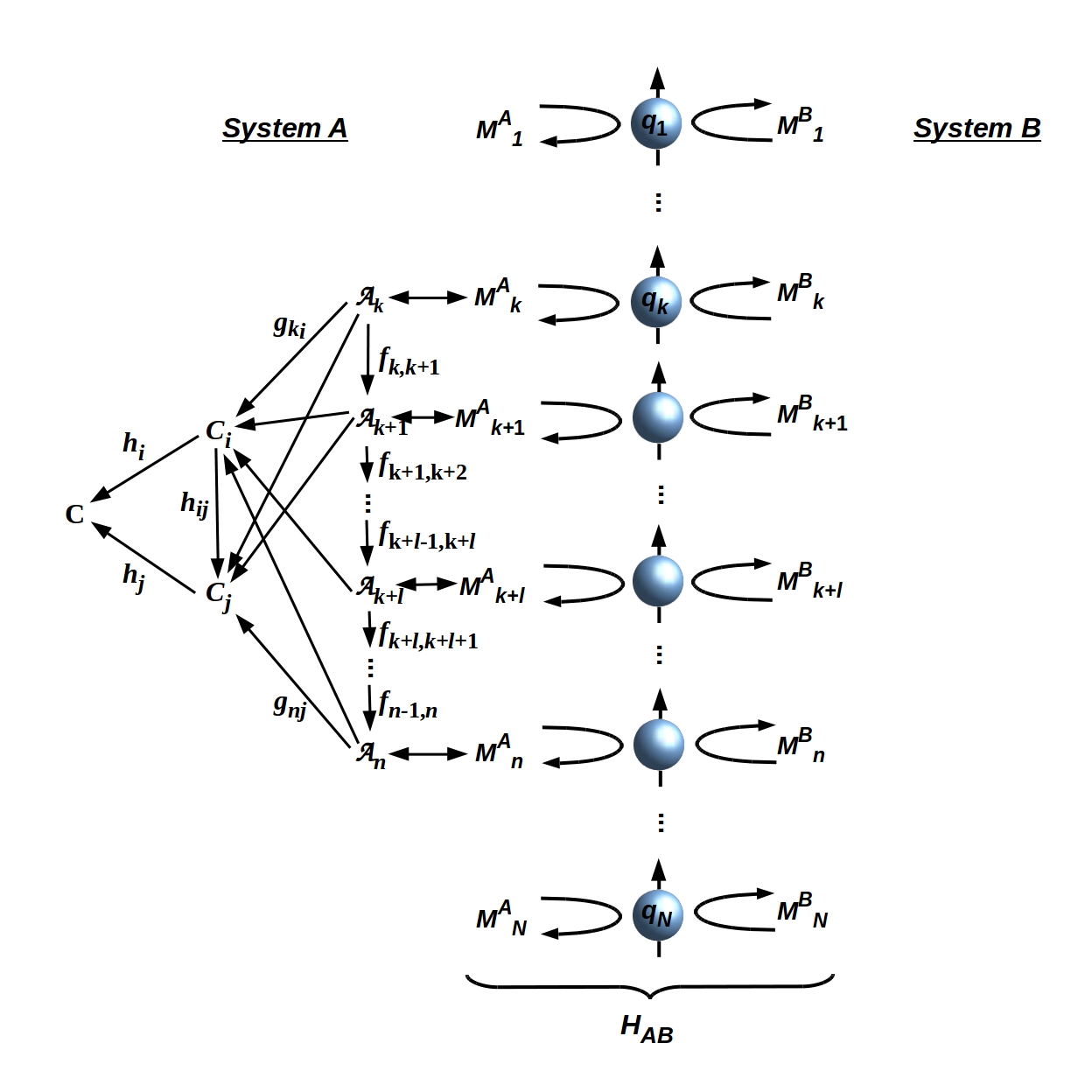}
\caption{Attaching a CCCD to a subset of measurement operators $M^A_k, \dots M^A_n$ by identifying the binary eigenvalues of the $M^A_i$ with binary inputs to the $\mathcal{A}_i$.  Only the CCD direction arrows are shown for simplicity; adding equivalent but reversed arrows completes the CCCD.  The CCCD specifies a function computed by the internal dynamics $\mathcal{P}_A$, i.e. a QRF deployed by $A$.  Adapted from \cite{fgm:21} Fig. 3; CC-BY license.}
\label{CCCD-to-screen-fig}
\end{figure}

Thus far we have considered binary CCCDs, corresponding to Boolean ANNs or VAEs or to QRFs imposing Boolean constraints.  Mapping a CCCD to an arbitrary ANN or VAE that computes some function of interest $\mathbf{F}$ just requires assigning probabilities, i.e. ``weights'' to the infomorphisms in a way that respects the Kolmogorov axioms \cite{fg:19a, fg:21}.  We can then treat the computed function $\mathbf{F}$ from network inputs to network outputs as an arbitrary probabilistic QRF, or as suggested by Diagram \eqref{cccd-3}, a pair of coupled QRFs that process some ``sensory'' signal received by the $\mathcal{A}_i$ and write the result to a lower-dimensional and hence coarse-grained ``memory'' via the $\mathcal{A}^{\prime}_i$.  A QRF so defined is naturally interpreted as performing hierarchical Bayesian inference \cite{fg:19a,fg:21}; see \cite{fg:19b,fg:20b} for applications to human cognition.

To assign probabilities to infomorphisms, it is convenient to add the structure of a ``local logic'' $\mathcal{L}(\mathcal{A})$ relating subsets of tokens and types to each classifier $\mathcal{A}$ \cite{barwise:97}.  To do this, we define an ``implication'' relation between subsets of types:
\begin{definition}\label{prob-1}
Two subsets $M, N \subseteq Typ(\mathcal{A})$ are related by a {\em sequent} $M \models_{\mathcal{A}} N$ if $\forall x \in Tok(\mathcal{A}), x \models_{\mathcal{A}} M \Rightarrow x \models_{\mathcal{A}} N$.
\end{definition}
\noindent
Via the sequent relation, the local logic $\mathcal{L}(\mathcal{A})$ effectively arranges the types of $\mathcal{A}$ into a hierarchy; any classifier can be extended to a local logic in this way by adding types as needed.  Considering all classifiers to be extended to local logics, Diagrams \eqref{ccd-1} -- \eqref{cccd-3} can be considered diagrams of local logics by requiring each of the infomorphisms to be a ``logic infomorphism'' that preserves the sequent structure.  In general, given an information flow channel:
\begin{equation}\label{channel-flow-1}
\longrightarrow \mathcal{A}_{\alpha -1} \longrightarrow \mathcal{A}_{\alpha} \longrightarrow \mathcal{A}_{\alpha +1} \longrightarrow \cdots
\end{equation}
\noindent
the semantic content can be extended by postulating local logics $\mathcal{L}_{\alpha} = \mathcal{L}(\mathcal{A}_{\alpha})$ generated by the corresponding classifiers $\mathcal{A}_{\alpha}$ (assumed, in principle, to be in relationship to a (regular) theory associated to the individual $\mathcal{A}_{\alpha}$, as specified in \cite[Ch 9]{barwise:97} (cf. \cite{barwise:97a}) and reviewed in \cite[Appendix A]{fg:21}), so to obtain a flow of logic infomorphisms:
\begin{equation}\label{channel-flow-2}
\cdots \longrightarrow \mathcal{L}_{\alpha -1} \longrightarrow \mathcal{L}_{\alpha} \longrightarrow \mathcal{L}_{\alpha +1} \longrightarrow \cdots
\end{equation}
\noindent
which we can take to comprise comprise a CCD as in Eq. \eqref{ccd-1}.  Logic infomorphisms are, effectively, embeddings of type hierarchies; Diagrams \eqref{cccd-2} and \eqref{cccd-3} can be viewed as embeddings into a ``top-level'' type hierarchy $\mathbf{C^\prime}$ that assigns an overall semantics to its inputs, followed by encodings of this top-level hierarchy into some componential representation.

The local logic $\mathcal{L}(\mathcal{A})$ defined above is Boolean.  To extend the type hierarchy defined by $\mathcal{L}(\mathcal{A})$ to a hierarchical Bayesian inference, we extend $\mathcal{L}(\mathcal{A})$ to a probabilistic logic by relaxing the sequent relation to require only that if $x \models_{\mathcal{A}} M$, there is some probability $P(N|M)$ that $x \models_{\mathcal{A}} N$.  We can then write:
\begin{equation}\label{prob-2}
M \models_{\mathcal{A}}^{P} N =_{def} P( M \vert N)
\end{equation}
\noindent
and construct (extended, probabilistic) logic infomorphisms as above, requiring that they preserve the conditional probabilities $P( M \vert N)$ for all subsets $M, ~N$ at each level of the hierarchy.\footnote{As pointed out in \cite{allwein:04}, it is instructive to see that Eq. \eqref{prob-2} reveals how a conditional probability can be used for interpreting the logical implication ``$\Rightarrow$''
as discussed in \cite{adams:98}.}
Traversing upwards in the hierarchy then imposes multiple conditioning on each ``low-level'' probability distribution; traversing downwards sequentially unpacks this conditioning.  In fundamental Bayesian terms, $M$ above can be regarded as a previous event, whether observed or conjectured, and $N$ as a currently observed datum, in which case $P(M)$ becomes the prior, and $P(N)$ the evidence, that together generate a prediction. Given the likelihood $P(N \vert M)$ as the conditional obtained from weakening the sequent via Eq. \eqref{prob-2}, Bayes' theorem specifies this conditional as the posterior:
\begin{equation}\label{prob-3}
P(M \vert N) = \frac{P(N\vert M) P(M)}{P(N)}.
\end{equation}
A brief example illustrates these principles as follows: consider arbitrary classifiers $\mathcal{A}_1^{(a)}, \ldots,  \mathcal{A}_5^{(e)}$ in some part of an information channel where (as in \cite{fg:21}) the classifiers correspond to events $a,b,c,d,e$, respectively, together with logic infomorphisms $f_{13}, \ldots, f_{45}$ between them, in which the sequents are relaxed to conditional probabilities via Eq. \eqref{prob-2}:
\begin{equation}
\begin{gathered}
\xymatrix{&\mathcal{A}_1^{(a)} \ar[dl]_{f_{13}} \ar[dr]^{f_{14}} &  & \mathcal{A}_2^{(b)}\ar[dl]_{f_{24}} \\
\mathcal{A}_3^{(c)} & & \mathcal{A}_4^{(d)} \ar[d]^{f_{45}}  &  & \\
& & \mathcal{A}_5^{(e)}
}
\end{gathered}
\end{equation}
\noindent
Following e.g. \cite{cherniak:91}, this particular channel then generates a joint probability distribution given by:
\begin{equation}
p(abcde) = p(a)p(b)p(c\vert a)p(d \vert ab)p(e\vert d).
\end{equation}
Putting these details within the framework of the above diagrams, a portion of a typical CCD computing a hierarchical Bayesian inference from a set of (posterior) observations $\mathcal{A}_i$ to an outcome $\mathbf{C}'$ has the form:
\begin{equation}\label{ccd-prob-1}
\begin{gathered}
\xymatrix@C=4pc{&\mathbf{C^\prime} &  \\
\mathcal{A}_1 \ar[ur]^{p_{10}(\cdot \vert \cdot)} \ar[r]_{p_{12}(\cdot \vert \cdot)}& \mathcal{A}_2 \ar[u]_{p_{20}(\cdot \vert \cdot)} \ar[r]_{p_{23}(\cdot \vert \cdot)} & \ldots ~\mathcal{A}_k \ar[ul]_{p_{k0}(\cdot \vert \cdot)}
}
\end{gathered}
\end{equation}

In this formal setting, the diagram commutativity of Diagrams \eqref{ccd-1} -- \eqref{cccd-3}, with probabilities added as in Diagram \eqref{ccd-prob-1} above, enforces Bayesian coherence: the probability associated with any arrow $\mathcal{U} \rightarrow \mathcal{V}$ must equal the product of the probabilities associated with the arrows on any other directed path from $\mathcal{U}$ to $\mathcal{V}$.\footnote{Probability spaces and conditional distributions can also be defined directly for Chu spaces as exhibited in \cite[Exs. 2.4,~2.5]{fg:21}.  See \cite{smithe:21} for an alternative representation of Bayesian inference and derivation of the FEP using monodial categories and an operational formalism closely related to those employed in categorical quantum theory \cite{coecke:10}.}  Any pair of subsets of the $\mathcal{A}_i$ have, therefore, a well-defined joint probability distribution.  As shown in \cite{fg:21}, the converse is also true.  Letting $\mathcal{A}_1, \mathcal{A}_2 \dots \mathcal{B}$ and $\mathcal{B}, \mathcal{C}_1, \dots \mathcal{C}_k$ be finite sets of classifiers, we can state the following \cite[Thm. 7.1]{fg:21}:

\begin{theorem} \label{thm1}
A well-defined joint probability distribution exists over subsets $\mathcal{A}_1, \mathcal{A}_2 \dots \mathcal{B}$ and $\mathcal{B}, \mathcal{C}_1, \dots \mathcal{C}_k$ of classifiers if and only if the following diagram commutes:
\begin{equation} \label{double-ccd}
\begin{gathered}
\xymatrix@C=4pc{& &\mathbf{C} & & \\
& \mathbf{C_1} \ar[ur]^{\phi} & & \mathbf{C_2} \ar[ul]_{\psi} & \\
\mathcal{A}_1 \ar[ur]^{f_1} \ar[r] & \mathcal{A}_2 \ar[u]_{f_2} \ar[r] & \ldots ~\mathcal{B} \ar[ul]_{f_B} \ar[ur]^{g_B} \ar[r] & \mathcal{C}_1 \ar[u]_{g_2} \ar[r] & \ldots ~\mathcal{C}_k \ar[ul]_{g_k}
}
\end{gathered}
\end{equation}
\noindent
that is, if and only if there exist logic infomorphisms $\phi$, $\psi$ and a classifier $\mathbf{C}$ such that the above diagram is a CCD.
\end{theorem}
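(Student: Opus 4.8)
The plan is to reduce this two-family statement to the single-family characterization established in the paragraph preceding the theorem: a finite set of (probabilistically weighted) classifiers carries a well-defined joint distribution if and only if it assembles into a commuting CCD of the form \eqref{ccd-1} with a colimit core, the path-product Bayesian-coherence condition being exactly what commutativity encodes. The shared classifier $\mathcal{B}$ is the hinge of the whole argument, and everything turns on whether the two cocone cores $\mathbf{C_1}$ and $\mathbf{C_2}$ can be reconciled over $\mathcal{B}$ into a single apex $\mathbf{C}$.

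First, for the forward direction, I would assume a well-defined joint distribution over both $\{\mathcal{A}_1, \ldots, \mathcal{B}\}$ and $\{\mathcal{B}, \mathcal{C}_1, \ldots, \mathcal{C}_k\}$. Since the two families overlap only in $\mathcal{B}$, and a genuinely well-defined joint distribution forces the two marginals over $\mathcal{B}$ to agree, I would first splice them into a single joint distribution over the union $\{\mathcal{A}_1, \ldots, \mathcal{B}, \mathcal{C}_1, \ldots, \mathcal{C}_k\}$. Applying the single-family result to this union produces a commuting CCD with a colimit core $\mathbf{C}$ and logic infomorphisms from every classifier into $\mathbf{C}$. Because $\mathbf{C_1}$ is by construction the colimit of the left subdiagram and $\mathbf{C}$ receives a compatible cocone from that same subdiagram, the universal property of the colimit supplies a unique logic infomorphism $\phi \colon \mathbf{C_1} \to \mathbf{C}$ factoring each $\mathcal{A}_i \to \mathbf{C}$ through the $f_i$; symmetrically I obtain $\psi \colon \mathbf{C_2} \to \mathbf{C}$. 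These factorizations are exactly the commutativity of \eqref{double-ccd}.

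For the converse, I would assume $\phi$, $\psi$, $\mathbf{C}$ exist making \eqref{double-ccd} commute and run the composition in reverse: the composites $\phi \circ f_i$ map $\mathcal{A}_1, \ldots, \mathcal{B}$ into $\mathbf{C}$ and $\psi \circ g_j$ map $\mathcal{B}, \mathcal{C}_1, \ldots, \mathcal{C}_k$ into $\mathbf{C}$, while the two induced maps out of the shared node, $\phi \circ f_B$ and $\psi \circ g_B$, coincide by commutativity. The union therefore assembles into one commuting CCD with apex $\mathbf{C}$, and the single-family result in the direction ``commuting CCD $\Rightarrow$ Bayesian coherence $\Rightarrow$ joint distribution'' yields a well-defined joint distribution over the union, whose marginals restrict to the two original subfamilies.

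I expect the main obstacle to be not the bare existence of $\phi$ and $\psi$ --- these arise automatically as infomorphisms in $\mathbf{Chan} \cong \mathbf{Chu}(\mathbf{Set}, K)$, since $\mathbf{Set}$ is cocomplete and the diagrams are finite --- but the requirement that they be \emph{logic} infomorphisms preserving the sequent structure and, once sequents are relaxed to conditional probabilities via \eqref{prob-2}, the values $P(\cdot \mid \cdot)$. The delicate step is to lift the colimit from plain $\mathbf{Chan}$ to the category of probabilistic local logics, equivalently to verify that the path-product weightings on the two subfamilies, glued along $\mathcal{B}$, propagate to a globally coherent weighting on the combined diagram without contradiction at the shared node. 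Concretely, all the categorical scaffolding reduces to checking that the likelihood assigned along $f_B$ and the likelihood assigned along $g_B$ push forward to one and the same conditional along $\mathbf{C}$; this compatibility at $\mathcal{B}$ is both the crux of the ``only if'' construction and the content that the ``if'' direction consumes.
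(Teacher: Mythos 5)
The paper does not actually prove Theorem \ref{thm1} internally: its proof is a deferral to \cite{fg:21}, recording only the key observation that $\phi$ and/or $\psi$ can fail to exist only if the joint probability distribution fails to exist. So the question is whether your self-contained argument is sound. Your converse direction (commuting diagram $\Rightarrow$ Bayesian coherence $\Rightarrow$ joint distribution) matches the paper's own discussion preceding the theorem, your use of the universal property of the colimits $\mathbf{C_1}$ and $\mathbf{C_2}$ to produce $\phi$ and $\psi$ is the right shape, and your closing paragraph correctly locates the real content in the compatibility of the conditionals pushed forward along $f_B$ and $g_B$. The forward direction, however, contains a genuine gap.

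The gap is the splicing step. The hypothesis of Theorem \ref{thm1} is the existence of a \emph{single} joint distribution over the union $\mathcal{A}_1, \dots, \mathcal{B}, \mathcal{C}_1, \dots, \mathcal{C}_k$; you instead read it as two family-wise distributions agreeing on their $\mathcal{B}$-marginals, and repair the difference by gluing them via a conditional product. That gluing cannot be legitimate in this framework, and here is a concrete way to see why. Each family already comes equipped with its commuting cocone onto $\mathbf{C_1}$ (resp.\ $\mathbf{C_2}$), so by the easy direction each family \emph{always} has a well-defined family-wise distribution; and in the intended application, Diagram \eqref{context1} of \S\ref{noncomm}, where $U$ and $V$ are non-commuting pointer sectors sharing the context $R\tilde{E}F$, the two marginals at the shared node also agree (this is just no-disturbance). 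If your splicing were valid, the diagram would therefore commute in precisely the situation the paper invokes Theorem \ref{thm1} to exclude: failure of commutativity of \eqref{context1} is there the formal signature of non-co-deployability, i.e.\ of quantum contextuality. The resolution is that a ``well-defined joint probability distribution'' in this setting means a distribution coherent with the \emph{entire} weighted diagram --- equivalently, a distribution of jointly realized classification outcomes --- and its existence is exactly what is at stake; it cannot be manufactured for free from the two context-wise distributions by a Markov-type gluing, since doing so presupposes the co-deployability whose characterization is the theorem's content. Once the hypothesis is read correctly the splicing becomes unnecessary, but then the entire burden of the forward direction falls on the step you explicitly set aside as an ``obstacle'': showing that the colimit exists in the category of probabilistic local logics and that the induced $\phi$, $\psi$ are logic infomorphisms preserving the conditionals at $\mathcal{B}$. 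That deferred step \emph{is} the proof in \cite{fg:21}; as the proposal stands, the hard direction of the theorem is not established.
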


\begin{proof}
See \cite{fg:21}.  The key observation is that $\phi$ and/or $\psi$ can only fail to exist if the joint probability distribution fails to exist.
\end{proof}
\noindent
The use of overlapping subsets of classifiers in Theorem \ref{thm1} mirrors the use of mutually-noncommuting subsets of mutually-commuting observables in the canonical definition of quantum contextuality \cite{kochen:67, mermin:93} and its extension to a purely statistical criterion of incompatibility between measurement contexts \cite{dzha:17b, dzh:18}.  How the deployment of overlapping but mutually-noncommuting subsets of QRFs implements context-switching between pairs of complementary observables -- such as position and momentum -- is discussed in \S\ref{noncomm} below.  How the FEP can drive context-switching as a component of active inference is then discussed in \S\ref{asymptotic} below.

Viewing commutativity -- and hence Bayesian coherence -- as fundamental to the definition of measurement, and hence also to the definition of preparation or action on the environment generally, suggests that the Born rule, i.e. that if a state:

\begin{equation}
\vert X \rangle = \sum_i \alpha_i \vert x_i \rangle,
\end{equation}
\noindent
the probability of obtaining $\vert x_i \rangle$ as an observational outcome is:

\begin{equation}
P(\vert x_i \rangle = \vert \alpha_i \vert^2,
\end{equation}
\noindent
is a prescription for coherently assigning amplitudes $\alpha_i$ to components $\vert x_i \rangle$, consistent with the position advocated in \cite{fuchs:13, mermin:18}.


\section{Repeated measurements and system identification} \label{3}

\subsection{Memory, time, and coarse-graining} \label{clock}

The idea that a system must possess a (quasi-) NESS solution to its density dynamics -- and hence be restricted to trajectories in its classical configuration space that do not diverge exponentially over time -- in order to be observable as a ``thing'', immediately raises the issues of time as measurable duration and of memory as persistent over measurable time.  As the simplest case, consider the NESS density $\rho_E$ of the observed environment sector $E$ of some agent $A$.  The agent $A$ can only detect changes in $\rho_E$ and employ them as bases for inferences and actions if $A$ can write time-ordered records $[\rho_E (t_A)]$ to, and subsequently read them from, the memory sector $Y$.  Here $t_A$ is an $A$-specific time coordinate that must be constructed.  Persistence of the memory record $[\rho_E (t_A)]$ between writing and reading requires that it be ``protected'' from environmental noise, i.e. from ongoing events in $E$; hence if $\vert Y \rangle$ is the state encoding $[\rho_E (t_A)]$, $|Y \rangle$ must vary only slowly under the action of $H_B$.  In the notation of Eq. \eqref{ham}, ``protection'' occurs if the coefficients $\alpha^B_i$ of operators $M^B_i$ acting on qubits within $Y$ all have small magnitudes.  We can, therefore, distinguish two classes of actions by $A$ on its boundary/MB $\mathscr{B}$.  Actions on $E$ are ``questions'' in Wheeler's \cite{wheeler:89} sense; they provoke informative responses from $B$ that can be used to develop a generative model of $H_B$ as discussed below.  Actions on $Y$ are recordings that must remain relatively stable to be useful.  Stability of the memory sector $Y$ against $H_B$ is a critical resource for any system $A$ capable of responding to environmental state changes, and hence of any $A$ capable of active inference. Heuristically, this is clearly evinced in active vision, where we actively palpate the world, every 250 ms or so, to update our working memory $Y$; i.e., update our (Bayesian) beliefs about the observed environment $E$.

The Second Law tells us that protecting any state against noise requires the expenditure of free energy.  Hence, any agent $A$ is faced with a fundamental thermodynamic tradeoff: maintaining the stability of $Y$ requires free energy sourced from the unobserved (traced over) environmental sector $F$.  For fixed $H_{AB}$, and hence fixed $\mathscr{B}$, expanding $F$ to obtain more free energy for $Y$ requires shrinking $E$, i.e. ``seeing'' less of the environment.  The alternative is to coarse-grain $Y$ either in time or in space, i.e. in bit-string length.  Which horn of this thermodynamic trilemma a given system takes is determined by the QRFs it implements. Classical manifestation of this trade-off are seen in many guises. For example, the intimate relationship between the efficiency of information transfer afforded by compression -- as seen in minimum message length formulations of variational free energy \cite{MacKay:95,Wallace:99} -- through to the minimisation of statistical complexity afforded by coarse-graining and quantisation \cite{Smith:19}.

The simplest memory-write process is illustrated in cartoon form in Fig. \ref{mem-write-fig}.  A state $\vert E \rangle$, a bit string of length $\dim(E)$, is constructed from one-bit operators $M^E_i$ by a QRF $\mathbf{E}$.  It is then written to the memory $Y$ by a QRF $\mathbf{Y}$, with free energy sourced from the remainder of $\mathscr{B}$, i.e. from the unobserved sector $F$.  In the simplest case, the memory capacity $\dim(Y) = n \dim(E) + \log_2 n$ where $n$ is the number of distinguishable records.  The $\log_2 n$ labels that allow records to be distinguished can, without loss of generality, be considered to be an integer sequence of clock ticks $i \rightarrow i+1$, starting from $i = 1$.  Hence, any memory with more than one-record capacity defines a clock $\mathcal{G}_{ij}$, which we will see below must, in general, be a groupoid operator \cite{fg:20}.\footnote{Recall that a {\em groupoid} is a category in which the objects and arrows each comprise a set (as for a `small' category), and every arrow is invertible \cite{weinstein:96, brown:06}. A groupoid generalizes the {\em group} concept in so far that the former can admit ``multiple identities'' (the objects).}  This clock is an internal time QRF that defines the time coordinate $t_A$.

\begin{figure}[H]
\centering
\includegraphics[width=13 cm]{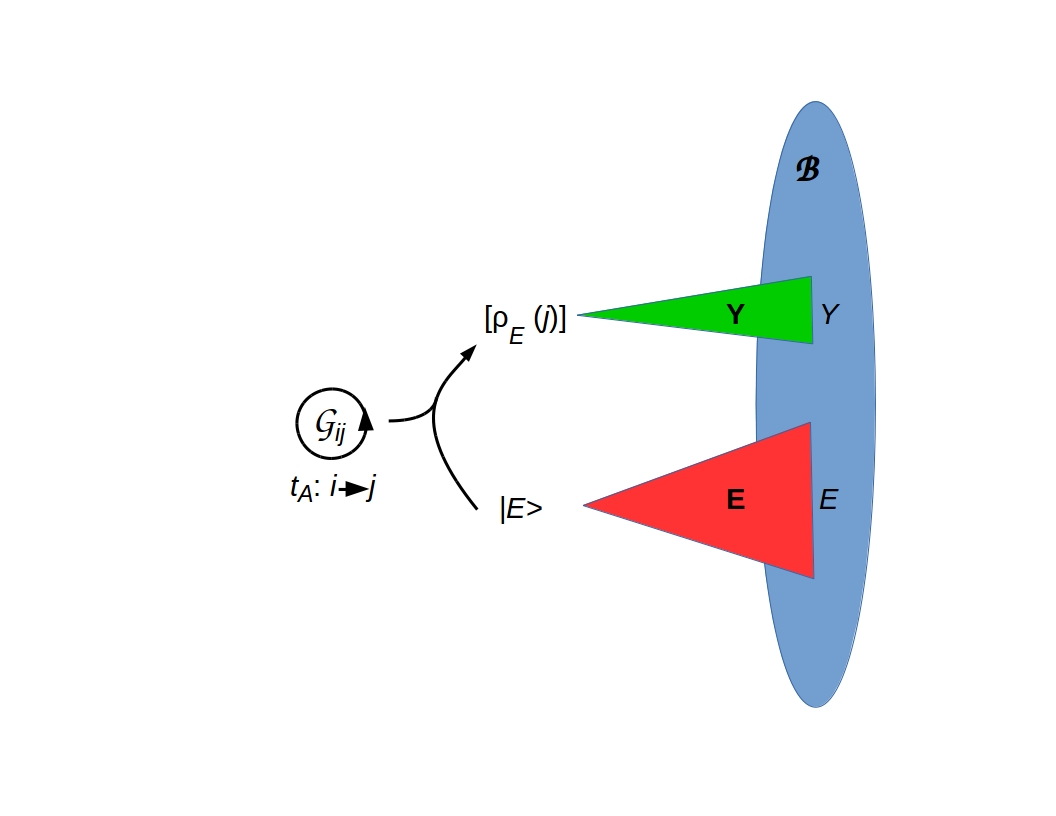}
\caption{Cartoon illustration of QRFs required to write a readable memory of an observed environmental state $\vert E \rangle$.  The QRFs $\mathbf{E}$ and $\mathbf{Y}$ read the state from $E$ and write it to $Y$ respectively.  The clock $\mathcal{G}_{ij}$ is a time QRF that defines the time coordinate $t_A$.}
\label{mem-write-fig}
\end{figure}

The thermodynamic trade-off faced by $A$ is between $\dim(E)$, $\dim(Y)$, and the sampling time of states $\vert E \rangle$, which determines the length of one clock tick and hence of one unit of $t_A$.  As mentioned earlier, the minimum timescale for biological systems is set by the thermal dissipation time to roughly 50 fs.  Practical biological clocks run much more slowly; a clock based on Gamma-band neural activity in mammalian cortex, for example, has a sampling time of 10 -- 20 ms.  The recorded record, in this case, is a sample from or average over a time ensemble $< \vert E \rangle >$ of measured states, i.e., is a record $[ \rho_E ]$ of a coarse-grained density.  This record can be further compressed to achieve $\dim([ \rho_E ]) < \dim(E)$ and hence $\dim(Y) < n \cdot \dim(E) + \log_2 n$.

With these QRFs, the memory read-compare-write cycle can be represented as in Fig. \ref{write-read-fig}.  The classical record $[ \rho_E(i) ]$ written in the previous cycle is read by $\mathbf{Y}$ at ``external'' (i.e. Eq. \eqref{prop} or \eqref{ham-time}) time $t$.  It is compared to the current measured state $\vert E(t) \rangle$ and a new record $[ \rho_E(j) ]$ is written by $\mathbf{Y}$ at $t + \Delta t$.  This read-compare-write cycle advances the internal clock $\mathcal{G}_{ij}$ by one tick $i \rightarrow j$.  All QRFs together with the comparison function are implemented by the internal dynamics $\mathcal{P}_A$.  Formally, we can think of $\mathcal{P}_A$ as a weighted sum of ``all possible paths'' from the boundary state $\vert \mathscr{B} (t) \rangle$ to the boundary state $\vert \mathscr{B} (t + \Delta t) \rangle$ as in Eq. \eqref{compute} \cite{deutsch:02}; see \cite{marciano:21} for discussion.

\begin{figure}[H]
\centering
\includegraphics[width=15 cm]{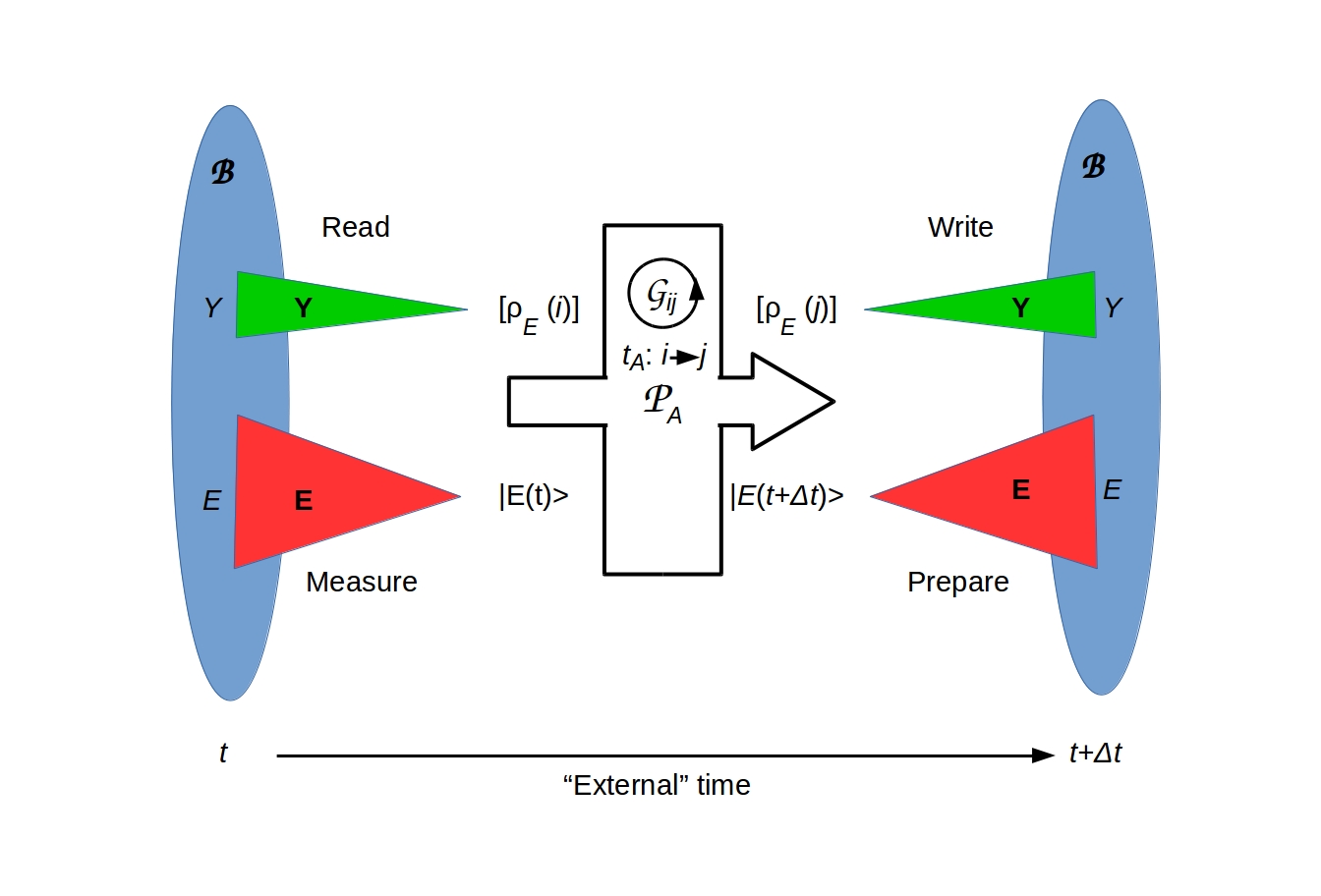}
\caption{Cartoon illustration of one memory read-compare-write cycle defining one tick $i \rightarrow j$ of the internal clock $\mathcal{G}_{ij}$ and requiring an interval $\Delta t$ of ``external'' time $t$.  All QRFs and the comparison function are implemented by the quantum dynamics $\mathcal{P}_A$ (block arrow).}
\label{write-read-fig}
\end{figure}

Fig. \ref{write-read-fig} explicitly illustrates an important distinction between classical and quantum representations of dynamics and hence between classical and quantum formulations of the FEP.  Classical physics assumes a spacetime embedding; hence the MB of any topologically-connected system can be associated with a spatial boundary that follows a smooth spacetime trajectory.  ``Internal'' states of the system and the ``internal'' dynamics that operates on those states to maintain a NESS are spatially localized inside this boundary, and are, in particular, not exposed {\em on} the boundary.  The current, quantum setting makes, in contrast, no assumptions about spacetime embedding as emphasized earlier.  The boundary $\mathscr{B}$ represents a Hilbert-space decomposition, not a ``physical'' spacetime decomposition.  As $\mathscr{B}$ is the only locus of classical information, ``internal'' classical states, including all classical memory records, are exposed on $\mathscr{B}$.  The assumption that these exposed states are ``protected'' from $H_B$ corresponds to the classical assumption of ``self-evidencing'': that maintaining the integrity of the MB as a spatial boundary enables the maintenance of the NESS and vice-versa \cite{friston:19}.

The true internal state of a quantum system $A$ is the state $\vert A \rangle$ that remains independently well-defined provided the joint state $\vert AB \rangle$ remains separable, i.e. provided the interaction $H_{AB}$ can be written as Eq. \eqref{ham}.  This internal state is ``protected'' from $H_B$ by definition.  Hence for quantum systems, ``self-evidencing'' is logically equivalent to separability.  Classical memories are at risk of environmental perturbation, unless sufficient free energy is devoted to maintaining them.  Quantum ``memories'' are encoded by the dynamics $\mathcal{P}_A$ and are at risk of environmental perturbation only if separability breaks down, i.e. if $\vert AB \rangle$ becomes entangled.

With this non-spatial understanding of the ``internal'' states of a quantum system $A$ with a classical memory, both the quantum $\vert A (t) \rangle \rightarrow \vert A (t + \Delta t) \rangle$ and the classical $[ \rho_E(i) ] \rightarrow [ \rho_E(j) ]$ loops are internal to $A$.  Hence, the (classical) integrated information $\Phi (A) > 0$ as defined in Integrated Information Theory (IIT) \cite{tononi:14}, rendering $A$ ``conscious'' and hence an ``agent'' in the framework of IIT.  Hence the notion of agency in IIT is consistent with that of Definition \ref{agent-def}.

\subsection{Learning and generative models} \label{learning-gm}

Learning a generative model allowing predictions of the future behavior of $E$ is straightforward in this setting.  The prediction problem can be represented as follows:

\begin{equation}
\underbrace{[\rho_E (1)], [\rho_E (2)], \dots [\rho_E (k-1)]}_{Prior},  \underbrace{[\rho_E (k)]}_{Posterior} \rightarrow  \underbrace{[\rho_E (k+1)],}_{Prediction} \\
\end{equation}
\noindent
where $[\rho_E (1)], [\rho_E (2)], \dots [\rho_E (k-1)]$ are previous memory records, $[\rho_E (k)]$ is the current record, and $[\rho_E (k+1)]$ is the not-yet-obtained next record.  The prior data can be re-represented in three progressively more sophisticated ways:

\begin{enumerate}
\item As a probability distribution over the set of possible records, a discrete set of no more than $2^{\dim(E)}$ elements.
\item As a probability distribution over the set of pairs $([\rho_E (i)], [\rho_E (i+1)])$ and hence as a discrete Markov process.
\item As a map from probability distributions on records $1, 2, \dots i$ to probability distributions on records $1, 2, \dots i+1$ and hence as a discrete Markov kernel.
\end{enumerate}
\noindent
Representing the prior data by a discrete Markov kernel provides the greatest data compression, at the cost of more sophisticated processing by $\mathcal{P}_A$.

The process of updating a Markov kernel $\mathbb{M}^A_E (i)$ representing $A$'s prior data for $E$ can be formalized as:

\begin{equation} \label{learning}
\mathscr{L} : ( \mathbb{M}^A_E (i), ~[\rho_E (i)] ) \mapsto \mathbb{M}^A_E (i+1),
\end{equation}
\noindent
where $\mathscr{L}$ is an operator of the form $(function, data) \rightarrow ~function^{\prime}$, i.e. a learning operator.  Hence any system $A$ that stores prior information using a data structure more space-efficient than an explicit linked list can be considered to be learning.

As the records $[\rho_E (i)]$ are just bit strings and Bayesian coherence is guaranteed by the commutativity of the operators composing the QRF $\mathbf{Y}$, the sequence $[\rho_E (1)], [\rho_E (2)], \dots [\rho_E (k)]$ can be represented by a ``true'' Markov kernel $\mathbb{M}_E (k)$ satisfying the following commutativity constraint:

\begin{equation} \label{true-Markov}
\begin{gathered}
\xymatrix{
\rho_E (i) \ar[r]^{\mathbb{M}_E} & \rho_E (i+1) \\
\vert \mathscr{B} (t) \rangle \ar[u]^{\mathbf{E}}  \ar[r]^{\mathcal{P}_U} & \vert \mathscr{B} (t + \Delta t) \rangle \ar[u]_{\mathbf{E}}
}
\end{gathered}
\end{equation}
\noindent
where as before one ``tick'' of $t_A$ corresponds to $\Delta t$ externally and the notation $\vert \mathscr{B} \rangle$ indicates the state of the qubit array encoded on $\mathscr{B}$.  At step $k$ in $A$'s acquisition of state information about $E$, therefore, $A$'s prediction error $Er_E$ for $E$ is:

\begin{equation} \label{error}
Er_E (k) = d(\mathbb{M}^A_E (k), \mathbb{M}_E (k)),
\end{equation}
\noindent
where the $d$ is the metric distance on Markov kernels.  This definition is independent of $\mathscr{L}$ and hence of the sources of $A$'s prediction errors.

\subsection{Identifying and measuring systems embedded in $E$} \label{systems}

We have so far considered only observers $A$ that measure the states of their observed environments $E$ without decomposing $E$ into ``systems'' that have their own specific states.  Such undifferentiated measurements of $E$ plausibly characterize all biological systems that interact with their environments primarily biochemically, instead of mechanically \cite{robbins:16}.  A bacterium measuring an ambient salt concentration, for example, does not assign the concentration value to a specific object within the environment \cite{fgl:21}.  Animals as diverse as arthropods, cephalopods, and vertebrates, however, detect and track the states of specific external ``particles'' or objects, often other animals.  In terms of theoretical biology, this ability is either ancient, arising at least by the Cambrian explosion, or results from convergent evolution in multiple distinct lineages.

The question of how an observer $A$ distinguishes a system $S$ from the environment $\tilde{E}$, in which $S$ is embedded, is central to classical cybernetics \cite{ashby:56, moore:56} and, under the rubric of object persistence, to cognitive and developmental psychology \cite{scholl:07, fields:12}.  Here, $\tilde{E}$ indicates the remainder of $E$ when $S$ is removed, i.e. $E = S \tilde{E}$.   While the question of how an observer distinguishes an external system from its surrounding environment, prior to -- and as a precondition for -- measuring some state of interest, is often neglected by physicists, it imposes significant thermodynamic and computational requirements on observers \cite{fields:18}.  Distinguishing a system from its environment requires measurement, so it is naturally formulated in the language of QRFs.

To set up some notation, we will consider any distinct, identified system $S$ to comprise two components, $S = PR$, where $P$ is the ``pointer'' component that indicates some state $\vert P \rangle$ (or density $\rho_P$ of time-averaged samples of $\vert P \rangle$) of interest and $R$ is a ``reference'' component that by maintaining a constant state $\vert R \rangle$ (or constant density $\rho_R$ of time-averaged samples of $\vert R \rangle$) allows $S$ to be re-identified while $\vert P \rangle$ varies.  Identifying a laboratory apparatus by monitoring a time-invariant reference state provides an example; see Fig. \ref{ref-vs-pointer-fig}.

\begin{figure}[H]
\centering
\includegraphics[width=13 cm]{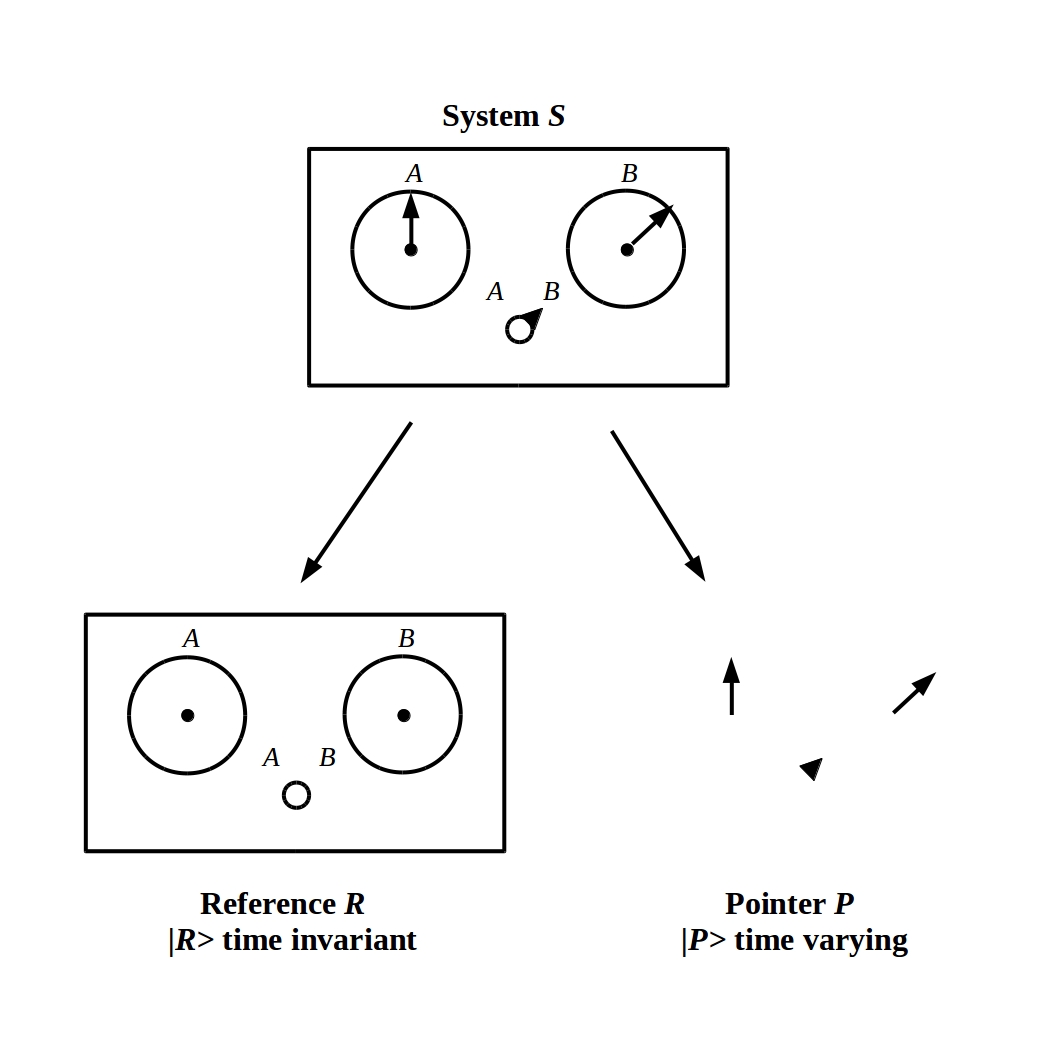}
\caption{Identifying a system $S$ requires identifying some proper component $R$ that maintains a constant state $\vert R \rangle$ (or density of time-averaged samples $\rho_R$) as the ``pointer'' state $\vert P \rangle$ (or density of time-averaged samples $\rho_P$) of interest varies.  Adapted from \cite{fg:20} Fig. 2, CC-BY license.}
\label{ref-vs-pointer-fig}
\end{figure}

Following the reasoning above, decomposing $E$ into disjoint components $E = PR\tilde{E}$ is defining subsets of operators $\{ M^E_i \} = \{ M^P_j \} \sqcup \{ M^R_k \} \sqcup \{ M^{\tilde{E}}_l \}$ where $\sqcup$ indicates disjoint union.  We can then define several QRFs:

\begin{eqnarray} \label{def-sector-QRFs}
\begin{gathered}
\mathbf{P}: P = {\rm{dom}}(\{ M^P_j \}) \rightarrow \vert P \rangle, \\
\mathbf{R}: R = {\rm{dom}}(\{ M^R_k \}) \rightarrow \vert R \rangle, \\
\mathbf{\tilde{E}}: \tilde{E} = {\rm{dom}}(\{ M^{\tilde{E}}_l \}) \rightarrow \vert \tilde{E} \rangle, \\
\end{gathered}
\end{eqnarray}
\noindent
with corresponding memory records $[\rho_P (i)]$, $[\rho_R (i)]$, and $[\rho_{\tilde{E}} (i)]$.  As QRFs that prepare as well as measure their assigned sectors of $\mathscr{B}$, these QRFs $\mathbf{P}$, $\mathbf{R}$, and $\mathbf{\tilde{E}}$ can be represented as CCCDs with the symmetric form of Diagram \eqref{cccd-2}.  The processing pathway from a given sector $X$ to its associated memory record $[\rho_X]$ can, assuming a coarse-grained memory, be represented as a CCCD with the asymmetric form of Diagram \eqref{cccd-3}; the memory-read to preparation of $X$ pathway has the opposite asymmetry.  Note that identifiability of $S$ requires $[\rho_R (i)] = [\rho_R (j)]$ for all $i, j$.  Forgetting what someone looks like, for example, can lead to re-identification failure.

Measurements of $P$, $R$, and $\tilde{E}$ are of no use unless they are recorded.  As above, Markov kernels provide the most efficient data structure.  The kernel $\mathbb{M}^A_R$ must be constant to enable system identification.  By analogy with Eq. \eqref{learning}, these kernels can be associated with a learning operators $\mathscr{L}_P$, $\mathscr{L}_R$, and $\mathscr{L}_{\tilde{E}}$; where there is no requirement that these employ the same learning algorithm.  The ``true'' Markov kernel $\mathbb{M}_E (i)$ can similarly be decomposed into components, each of which must satisfy the commutativity constraint expressed by Diagram \eqref{true-Markov}:

\begin{equation} \label{true-comps}
\begin{gathered}
\xymatrix{
\rho_P \ar[r]^{\mathbb{M}_P (i)} & \rho_P (i+1) & \rho_R (i) \ar[r]^{\mathbb{M}_R} & \rho_R (i+1) & \rho_{\tilde{E}} (i) \ar[r]^{\mathbb{M}_E} & \rho_{\tilde{E}} (i+1) \\
\vert \mathscr{B} (t) \rangle \ar[u]^{\mathbf{P}}  \ar[r]^{\mathcal{P}_U} & \vert \mathscr{B} (t+ \Delta t) \rangle \ar[u]_{\mathbf{P}} & \vert \mathscr{B} (t) \rangle \ar[u]^{\mathbf{R}}  \ar[r]^{\mathcal{P}_U} & \vert \mathscr{B} (t+ \Delta t) \rangle \ar[u]_{\mathbf{R}} &\vert \mathscr{B} (t) \rangle \ar[u]^{\mathbf{\tilde{E}}}  \ar[r]^{\mathcal{P}_U} & \vert \mathscr{B} (t+ \Delta t) \rangle \ar[u]_{\mathbf{\tilde{E}}}
}
\end{gathered}
\end{equation}
~ \\
\noindent
Hence prediction errors for $P$, $R$, and $\tilde{E}$ can be defined as in Eq. \eqref{error}.  Failing to correctly predict $\rho_R$ results in system-identification failure and renders concurrent observations of $\rho_P$ meaningless.

The ``systems'' $P$, $R$, and $\tilde{E}$ are, clearly, just subsets of outcomes obtained by measuring the qubits on $A$'s boundary/MB $\mathscr{B}$; their states are, therefore, determined by the actions of $B$'s encoding operators $M^B_i$.  As $[M^B_i, M^B_j] = 0$ for all $i, j$ by definition, $R$, $P$, and $\tilde{E}$ must all be mutually separable and hence mutually decoherent; equivalently, $\mathbf{R}$, $\mathbf{P}$, and $\mathbf{\tilde{E}}$ must all mutually commute.  Hence, $A$ can regard $R$, $P$, and $\tilde{E}$ as ``things'' with distinct identities and states as required by \cite{friston:19}.  It bears emphasis, however, that no spacetime background has been assumed in writing Eq. \eqref{ham}, in defining $\mathscr{B}$, or in defining any of the sectors $R$, $P$, or $\tilde{E}$.  The ``things'' $R$, $P$, and $\tilde{E}$ are not, therefore, observer-independent in any sense, although observers deploying similar QRFs and able to communicate (i.e. deploying QRFs that enable mutual recognition and communication) may agree as to their states \cite{mermin:18, muller:20}.  The present formalism is, therefore, ``system-free'' in the sense of \cite{grinbaum:17} and hence represents measurements using external apparatus as fully device-independent.  While device independence is implied whenever an MB is invoked \cite{clark:17}, classical formulations of measurement interactions -- indeed, of perception generally -- tend nonetheless to assume differentiated external objects {\em a priori}, as Einstein's famous insistence that the Moon is there when no one is looking exemplifies \cite{pais:79}; see \cite{marr:82, palmer:99, pizlo:01} for examples from perceptual psychophysics.  Evolutionary considerations argue against any assumption of {\em a priori} objects or even Galilean invariances of motion \cite{prakash:20, singh:21}, consistent with the background-independent approach taken here. 

\subsection{Noncommutativity and context-switching}\label{noncomm}

As emphasized by Bohr almost 100 years ago \cite{bohr:28}, a finite quantum of action $\hbar$ partitions the set of all possible quantum measurement operators into a set of ``complementary'' noncommuting pairs, the most well-known being position $\hat{x}$ and momentum $\hat{p}$.  These operators, as well as all other operators acting on ``systems'' with associated spacetime coordinates, correspond in the current background-free framework to QRFs acting on sectors of $\mathscr{B}$, i.e., on subsets of qubits as described above.  Hence, noncommuting operators correspond to noncommuting QRFs, as formalized by Theorem \ref{thm1}.

Two QRFs $\mathbf{U}$ and $\mathbf{V}$ can fail to commute only if the underlying measurement operators fail to commute.  However, as noted previously, any set of operators $M^k_i$ appearing in Eq. \eqref{ham} must all mutually commute.  Switching between noncommuting QRFs $\mathbf{U}$ and $\mathbf{V}$, therefore, entails switching between a mutually-commuting operator set $M^A_i$, of which the $M^U_j$ are a subset, and a complementary mutually-commuting operator set $O^A_i$, of which the $O^V_j$ are a subset, where for at least some $i, j, ~[M^U_i, O^V_j] \neq 0$.  This switch implements a basis rotation on $H_{AB}$, leaving its dimension $N$ and its eigenvalues, the binary representations of which are the bit strings encodable on $\mathscr{B}$, both unchanged while replacing the amplitudes $\alpha^A_i$ with amplitudes $\lambda^A_i$ so that Eq. \eqref{ham} now reads, for $A$:

\begin{equation} \label{ham2}
H_{AB} = \beta^A k_B T^A \sum_i^N \lambda^A_i O^A_i.
\end{equation}
\noindent
In practice, we will be primarily interested in partial basis rotations in which the $M^A_i$ and the $O^A_i$ substantially overlap, e.g. maintaining fixed sectors $F$, $\tilde{E}$ and $R$ while switching between complementary pointer sectors as discussed below.  Note that such basis rotations have no effect on $B$ or its operators, so are undetectable, in principle, by $B$.

An observer $A$ capable of switching between noncommuting QRFs must, to maintain an operable memory, implement a clock that is invariant under basis rotations on $H_{AB}$.  If measurements made at clock ticks $i$ and $j$ do not commute, however, the corresponding clock operations will not commute; in particular $\mathcal{G}_{ij} \circ \mathcal{G}_{ji} \neq \mathcal{G}_{ji} \circ \mathcal{G}_{ij}$ where $\circ$ is operator composition.  It is for this reason that the $\mathcal{G}_{ij}$ form a groupoid, and not a group \cite{fg:20}.  From a more practical perspective, noncommutativity forces $t_A$ to be unidirectional, and hence memory records to be encoded irreversibly with an accompanying expenditure of free energy.  The internal clock $\mathcal{G}_{ij}$ thus defines $t_A$ as an entropic time, consistent with the analysis in \cite{rovelli:21}.  Any observer $A$, therefore, observes a unidirectional flow of information from $B$ and of dissipated heat to $B$; hence any observer $A$ confirms the Second Law with respect to its internal time $t_A$.  As $A$ and $B$ are completely symmetric by Eq. \eqref{ham}, $B$ also confirms the Second Law with respect to $t_B$, showing that the Second Law is observer-relative and independent of the ``external'' time $t$, consistent with the analysis in \cite{tegmark:12}.

Switching between noncommuting QRFs while holding other QRFs constant, e.g., switching between interference (position) and which-path (momentum) measurements on an interferometer identified by a fixed reference sector $R$, is switching between mutually-noncommuting but overlapping sets of mutually-commuting operators as described in Theorem \ref{thm1} and is a canonical quantum context switch \cite{kochen:67, mermin:93}.  Pointer-state observables are, in particular, {\em observables in context} as defined in \cite{fg:21}: the state $\vert P \rangle$ of any pointer sector $P$ is measured in the context of the separable joint state $\vert R \rangle \vert \tilde{E} \rangle \vert F \rangle$, where the component $\vert F \rangle$ is unobserved by definition.   By Theorem \ref{thm1}, two sets of pointer operators $M^U_j$ and $O^V_j$ as above, that define alternative pointer sectors $U$ and $V$ by Eq. \eqref{sector}, are mutually-commuting and hence {\em co-deployable} if and only if maps $\phi$ and $\psi$ exist such that the diagram:
\begin{equation} \label{context1}
\begin{gathered}
\xymatrix@C=4pc{& &\mathbf{C} & & \\
& \mathbf{C_1} \ar[ur]^{\phi} & & \mathbf{C_2} \ar[ul]_{\psi} & \\
& U \ar[u]^{f_1} \ar[r] & \ar[ul]^{g_1} R \tilde{E} F  \ar[ur]_{g_2} \ar[r] & \ar[u]_{f_2} V &
}
\end{gathered}
\end{equation}
\noindent
commutes. Here, we have replaced the explicit operators in Diagram \eqref{double-ccd} by their corresponding sectors to simplify the notation.  We can equally well interpret $UF$ and $VF$ as contexts for the observation of sectors $R$ and $\tilde{E}$: in this case switching between $U$ and $V$—with the unobserved sector $F$ held fixed yields consistent probability distributions if and only if Diagram \eqref{context1} commutes.

If Diagram \eqref{context1} fails to commute, then pointer-state observables are said to be {\em non-co-deployable}. Non-commutativity of the CCD in \eqref{context1} had been specified in \cite[Th. 7.1]{fg:21} in terms of non-existence of a consistently definable joint probability distribution of conditionals, such as for Diagram \eqref{ccd-prob-1}. This non-co-deployability of observables thus amounts to occurrence of intrinsic (quantum) contextuality in relationship to \eqref{context1}.\footnote{As recalled in e.g. \cite[\S3]{sulis:21}, `non-commutativity' is at the very heart of contextuality, as first formulated by von Neumann in terms of non-commutativity of self-adjoint operators representing measurement, with the impossibility of simultaneously measuring the eigenvalues corresponding to non-commuting operators. In summarizing the principal results of ensuing hidden variables theory, Mermin \cite{mermin:90} demonstrated the impossibility of finding a joint probability distribution for all possible observables.}

We will see in \S\ref{sources} below that context switching increases variational free energy (VFE) by generating Bayesian``prediction errors"; hence context-switching makes minimizing VFE and hence complying with the FEP more difficult.  Deploying noncommuting QRFs against a fixed background can, however, lead to radically better generative models, as the history of technological applications of quantum theory attests.  Hence, context-switching poses a fundamental challenge to any classical formulation of the FEP, and a fundamental explanadum for a quantum formulation.

\section{FEP for generic quantum systems} \label{FEP}

\subsection{Defining VFE for quantum systems} \label{VFE}

The FEP is a variational or least-action principle: it states that a system enclosed by an MB, and therefore having internal states that are conditionally independent of its environment, will evolve in a way that tends to minimize a VFE that is an upper bound on surprisal.  Formally, the VFE $F(\pi)$, where $\pi$ is a ``particular'' state $\pi = (b, \mu)$ comprising MB $(b)$ and internal $(\mu)$ components, can be written \cite[Eq. 8.4]{friston:19},

\begin{equation} \label{VFE-def}
F(\pi) \triangleq \underbrace{\mathfrak{I}(\pi)}_{Surprisal} + \underbrace{D_{KL}[Q_{\mu} (\eta ) \parallel P(\eta | b)]}_{Divergence} \geq \mathfrak{I}(\pi).
\end{equation}
\noindent
This functional as an upper bound on surprisal $\mathfrak{I}(\pi) = -\log P(\pi)$ because the Kullback-Leibler divergence ($D_{KL}$) term is always non-negative. This KL divergence is between the density over external states $\eta$, given the MB state $b$, and a variational density $Q_{\mu} (\eta )$ over external states parameterised by the internal state $\mu$.  If we view the internal state $\mu$ as encoding a posterior over the external state $\eta$, minimizing VFE is, effectively, minimizing a prediction error, under a generative model supplied by the NESS density. In this treatment, the NESS density becomes a probabilistic specification of the relationship between external or environmental states and particular (i.e. ``self'') states.

In the notation developed in \S \ref{2} and \ref{3} above, we can write the surprisal for a quantum system $A$ in its most general form as:

\begin{equation} \label{surprisal-gen}
\mathfrak{I}^A (t) = -P(\vert \mathscr{B} (t) \rangle ~\vert~ |A (t) \rangle )
\end{equation}
\noindent
and the corresponding evidence bound as:

\begin{equation} \label{eb-gen}
D_{KL}[Q_{\vert \mathscr{B} (t) \rangle} (\vert B (t) \rangle ) \parallel P(\vert B (t) \rangle ~\vert~ |\mathscr{B} (t) \rangle ].
\end{equation}
\noindent
In the current setting, however, these expressions have little direct utility, as our effective starting point, Eq. \eqref{ham}, constrains neither $\vert A (t) \rangle$ nor $\vert B (t) \rangle$.  Indeed, from a strict formal perspective, neither Eq. \eqref{surprisal-gen} nor Eq. \eqref{eb-gen} is well-defined in the current setting.  The full boundary/MB state $\vert \mathscr{B} (t) \rangle$ is, moreover, not an observable for $A$ (or $B$), as the thermodynamic-resource sector $F$ remains unobservable by definition.  We have, however, already derived in \S\ref{learning-gm} a representation of $A$'s prediction error, Eq. \eqref{error}, which we reproduce here for reference:

\begin{equation} \label{error2}
Er_E (k) = d(\mathbb{M}^A_E (k), \mathbb{M}_E (k)).
\end{equation}
\noindent
In this expression, the timestep $k$ counts $A$'s internal clock time $t_A$ and the kernels $\mathbb{M}^A_E$ and $\mathbb{M}_E$ are derived from observables and therefore constrained by the theory.  The kernel $\mathbb{M}_E (k)$ represents the observable behavior of $\mathscr{B}$, as localized to the sector $E$, up to $t_A = k$.  The kernel $\mathbb{M}^A_E (k)$ is $A$'s generative model of the action of the unknown dynamics $\mathcal{P}^B (t)$ on $\mathscr{B}$, also as localized to $E$.  Hence $Er_E (k)$ represents $A$'s total {\em reducible} uncertainty about $B$ at $t_A = k$.  It is, therefore, an upper bound on surprisal analogous, in the current setting, to $F(\pi )$.

The operators $M^E_i$ referenced by Eq. \eqref{error}, i.e. \eqref{error2} must, clearly, all be co-deployable.  In practice, however, $E$ is as discussed above subject to context switches of the form $UR\tilde{E} \rightarrow VR\tilde{E}$ whenever $A$ switches between noncommuting pointer QRFs $\mathbf{U}$ and $\mathbf{V}$ and hence non-co-deployable operators $M^U_i$ and $M^V_i$.  Hence $Er_E$ is only well-defined in the absence of context switches; in the presence of context switches the generalized uncertainty relation:

\begin{equation}
\Delta u \Delta v \geq \hbar /2
\end{equation}
\noindent
for pointer outcomes $u$ and $v$ can generate divergent uncertainties.  Hence in practice, any system $A$ is faced with separately minimizing:

\begin{equation} \label{error3}
Er_X (k) = d(\mathbb{M}^A_X (k), \mathbb{M}_X (k)).
\end{equation}
\noindent
for each sector $X$ defined by a QRF $\mathbf{X}$.  We can, therefore, formulate the FEP for generic quantum systems, taking context-switching into account, as:

\begin{quote}
{\bf FEP}: {\em A generic quantum system $A$ will act so as to minimize $Er_X$ for each deployable QRF $\mathbf{X}$.}
\end{quote}
\noindent
A trivial agent can be viewed as executing a trivial QRF, i.e. as only exercising choice of basis for writing to and reading from $\mathscr{B}$ as a whole, and so satisfies the FEP trivially.

\subsection{Sources of VFE for quantum systems} \label{sources}

As noted in the Introduction, there is no source of objective randomness in the current formalism.  Indeed, an observer $A$ can be regarded as {\em certain} of the states $\vert \tilde{E} \rangle$, $|R \rangle$, $\vert P \rangle$, and $\vert Y \rangle$ of the observable sectors of $\mathscr{B}$ at every (external) time $t$.  Uncertainty and prediction error -- and hence, VFE -- is generated in the current formalism by $A$'s in-principle ignorance of both the state $\vert B \rangle$ and the dynamics $\mathcal{P}_B$ of its interaction partner $B$. As the bits $A$ reads from $\mathscr{B}$ are written by $\mathcal{P}_B$, $A$'s ability to predict the future states of its observable sectors, and hence to minimize $Er_X$ for each sector $X$ via Eq. \eqref{error3}, depends on its ability to predict the behavior of $\mathcal{P}_B$ locally on each observable sector.  As the thermodynamic sector $F$ is not observed, direct predictions on $F$ are not possible; the local behavior of $\mathcal{P}_B$ on $F$ can at best be predicted from its local behavior elsewhere.  An animal, for example, must employ its available senses -- hence its observable sectors -- to predict the nutritional value of food.

The option space governing $A$'s ability to locally predict $\mathcal{P}_B$ is summarized in Fig. \ref{four-options-fig}.  What is important for $A$ is not the dynamic complexity or even the dimension of $\mathcal{P}_B$, both of which are unobservable in principle, but rather the dynamic complexity of the action of $\mathcal{P}_B$ on $\mathscr{B}$ (the dimension of this action is, clearly, just the dimension of $\mathscr{B}$).  Here, the weak-interaction limit that allows separability between $A$ and $B$ is significant: $H_{AB}$ (and hence $\mathscr{B}$) must have significantly lower dimension that $H_B$ (and hence $\mathcal{P}_B$) if the weak interaction limit to is hold.  The simplest case is shown in Fig. \ref{four-options-fig}, Panel a), in which the system $B$ is a trivial agent deploying no QRFs other that the choice of basis for interactions with $\mathscr{B}$.  The action of $\mathcal{P}_B$ is, in this case, limited to choice of basis, e.g., to rotating the $z$ axis $z_B$ in Fig. \ref{qubit-screen-fig}.  As discussed in \S\ref{rf1}, basis rotation by $B$ generates quantum noise in the communication channel defined by $H_{AB}$ that is indistinguishable by $A$ from classical noise.  Hence, the trivial agent $B$ in Fig. \ref{four-options-fig}, Panel a) ``looks like'' a noise source to $A$.  Emission of Hawking radiation from a black hole (BH) provides perhaps the most pure example of such a noise source; while the dimension ``inside'' the BH can be arbitrarily large (see e.g. \cite[Fig. 19]{almheiri:21}), the internal dynamics are uncoupled from the classical information encoded on the horizon and hence have no classical computational power.  As will be discussed in \S\ref{asymptotic} below, a ``small'' trivial agent will be driven by the FEP toward entanglement with the larger system $A$.

\begin{figure}[H]
\centering
\includegraphics[width=15 cm]{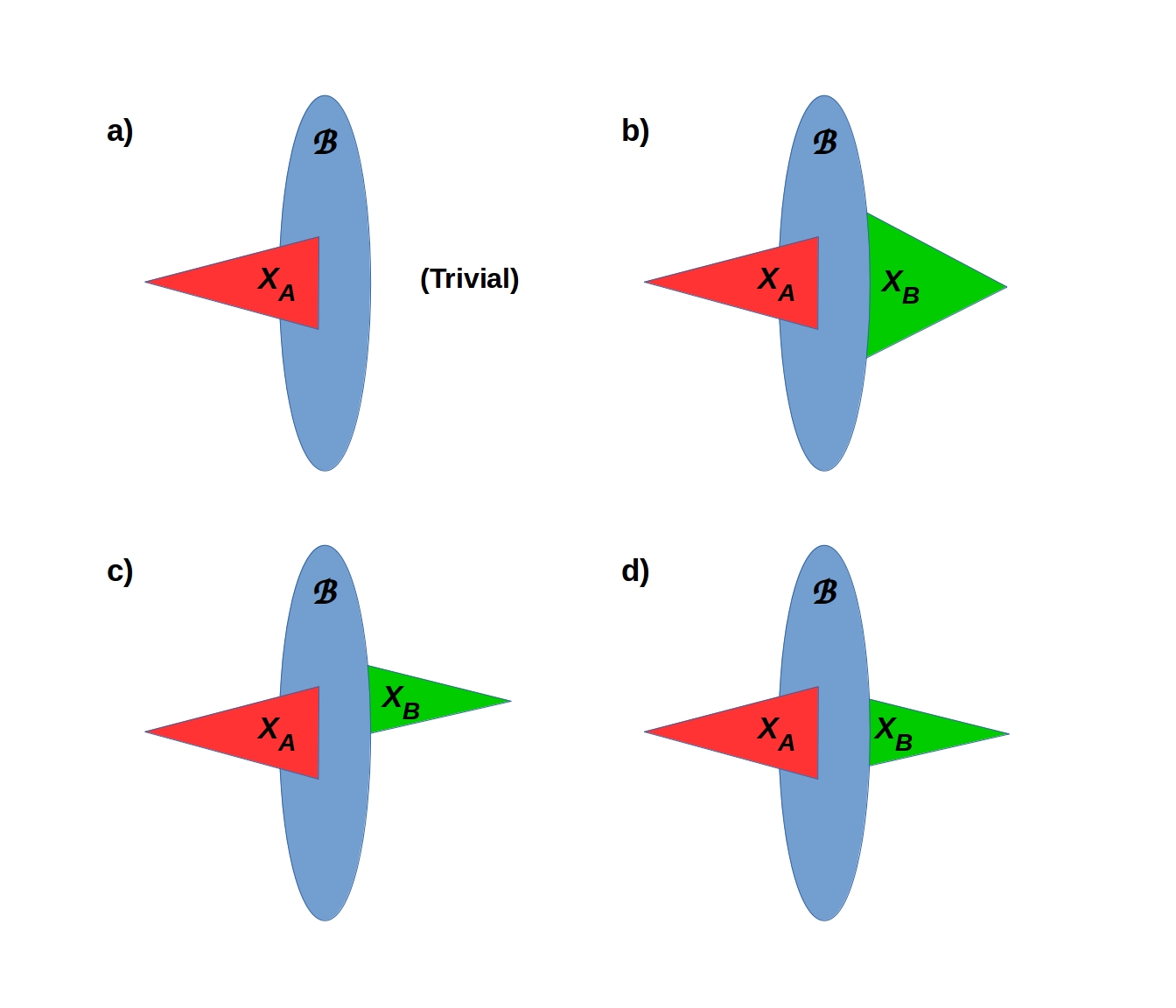}
\caption{Four options for $A$'s ability to predict the local behavior of $\mathcal{P}_B$ on an observable sector $X_A$.  a) A trivial agent deploying no QRFs beyond choice of basis for interacting with $\mathscr{B}$ appears as a noise source to $A$.  b) $B$ encodes a sector $X_B$ that contains $X_A$; the bits on $X_B$ but outside $X_A$ encode ``nonlocal hidden variables'' for $A$.  c) The sectors $X_A$ and $X_B$ overlap; the areas of non-overlap become noise sources.  d) If $X_A = X_B$, VFE is generated by insufficient learning.}
\label{four-options-fig}
\end{figure}

The more interesting parts of $A$'s option space for prediction are shown in Fig. \ref{four-options-fig}, Panel b), c), and d), in which $B$ is nontrivial.  If $B$ is nontrivial, it deploys at least one QRF $\mathbf{X_B}$ acting on a sector $X_B$.  As discussed in \S\ref{breaking}, $B$'s sectors must be mutually decoherent, so the action of $\mathcal{P}_B$ on $X_B$ is independent of its action elsewhere; it is this independence that makes prediction possible.  If $X_B$ does not overlap any {\em observable} sector for $A$, however, $B$ will appear trivial, i.e. as a noise source, to $A$.  Hence, the interesting cases are the ones in which $A$'s and $B$'s observable sectors overlap; this is the case, intuitively, in which $A$ and $B$ can ``see each other'' and hence interact in the ordinary, nontechnical sense of that term.

In Fig. \ref{four-options-fig}, Panel b), $B$'s sector $X_B$ fully contains $X_A$.  As noted above in connection with  Eq. \eqref{def-sector-QRFs}, all measure -- prepare QRFs $\mathbf{X}$ are symmetric, i.e ${\rm{codom}}(\mathbf{X}) = {\rm{dom}}(\mathbf{X})$.  Preparation by $B$ of the bits in $X_A$ will, therefore, in general depend on bits outside of $X_A$ but within $X_B$, i.e. on bits with the remainder $X_B \setminus X_A$.  The values of these bits are ``nonlocal hidden variables'' \cite{mermin:93} from $A$'s perspective; they affect what is observed on sector $X_A$ without being local to, i.e., contained within, $X_A$.  Indeed, such bits may be within $F$ and hence unobservable in principle by $A$.  Changes in the values of these nonlocal hidden variables are, effectively, context changes as defined in \cite{dzha:17b, dzh:18}; probability distributions $P(X_A | \zeta)$ and $P(X_A | \xi)$ for distinct hidden-variable states $\zeta$ and $\xi$ may be different.  The ``context-blind'' distribution $P(X_A)$ can, in this case, fail to be well-defined over time.  Such failures manifest as violations of Leggett-Garg inequalities \cite{emary:13}, i.e. as ``quantum hysteresis'' effects due to nonlocal (i.e. outside of $X_A$) and possibly unobservable causes.  In the language of artificial intelligence or robotics, they appear as failures to solve the Frame Problem \cite{fg:21}, the problem of predicting what will not change as the result of an action \cite{mccarthy:69}.  If $X_A$ is a reference sector for $A$, Frame Problem solution failures on $X_A$ can result in failures of object re-identification \cite{fields:13}. 

A situation in which $X_A$ fully contains $X_B$ presents similar issues to that in Fig. \ref{four-options-fig}, Panel b), except here the ``hidden variables'' are in $X_A \setminus X_B$ and hence are accessible to $A$.  The bits in $X_A \setminus X_B$ nonetheless contribute VFE -- effectively, noise -- to $X_A$ that is unconstrained by $X_B$.  If $X_A$ and $X_B$ overlap with remainders, as in Fig. \ref{four-options-fig}, Panel c), a similar noise contribution to $X_A$ (or on $B$'s side, to $X_B$) results.  The final possibility is, clearly, that in which $X_A = X_B$ as shown in Fig. \ref{four-options-fig}, Panel d).  Here, the source of VFE is not noise, but rather differences in the computations implemented by the QRFs $\mathbf{X_A}$ and $\mathbf{X_B}$.  Such differences correspond, in the notation of \S\ref{chan-th}, to differences in the structures of the CCCDs implementing $\mathbf{X_A}$ and $\mathbf{X_B}$, e.g. differences in the ``connection weights'' if these are thought of as ANNs or VAEs.  They correspond, in other words, to learning failures, e.g. due to insufficient training-set representativeness, as Eq. \eqref{learning} renders obvious.  We can, therefore, represent the overall situation for any observer $A$ as:
\begin{equation} \label{VFE-sources}
\mathrm{VFE ~=~ Noise ~+~ Insufficient ~Learning}
\end{equation}
\noindent
consistently with Eq. \eqref{VFE-def} above.  Here ``noise'' includes VFE generated by unobserved context changes (Leggett-Garg violations or Frame Problem solution failures) as well as ``classical'' noise. 

\subsection{Asymptotic behavior of the FEP} \label{asymptotic}

Having seen how VFE is generated, we can now ask how it is minimized: how, in other words, an agent acts in accordance with the FEP.  As discussed in the Introduction, the FEP in its classical form is effectively the statement that any system with sufficient stability to be a ``thing'' -- i.e. a system with a [quasi-] NESS density -- will act so as to preserve its ``thingness'' by maintaining the integrity of its MB and hence the integrity of its ``self'' as distinct from its surroundings.  Comparing Eq. \eqref{VFE-def} and \eqref{VFE-sources}, it becomes clear what this amounts to: a system ``self evidences'' by behaving in a way that minimizes noise while improving learning.  This, as is well known, induces a trade-off: learning requires seeking uncertainty in order to minimize it.  As emphasized in \cite{oudeyer:16}, successful learning requires a focus on learnable tasks and avoidance of unlearnable tasks.  Hence, minimizing VFE is removing all removable noise.  It does not, in practice, lead to perfect predictability of future MB states, but to best feasible predictability of future MB states. In classical FEP formulations, this becomes clearly evident in the form of a functional called expected free energy (EFE); namely, VFE expected under a posterior predictive density that is conditioned on action. In this setting, the most probable actions are those that minimise EFE and thereby resolve uncertainty or maximise information gain (a.k.a., intrinsic motivation or epistemic affordance). In short, novelty-seeking is an emergent property of any``thing", under the FEP.

We can, however, ask in the current framework how the FEP behaves asymptotically, i.e., what are the consequences for $A$ as, in the notation of Eq. \eqref{error3}, $Er_X (k) \rightarrow 0$ as $k \rightarrow \infty$ for all observable sectors $X$.  This clearly involves implementing a learning operator $\mathscr{L}_X$ for each sector $X$ that is capable of asymptotically-perfect learning: let us assume this is the case.  What remains given Eq. \eqref{VFE-sources} is noise, including noise due to observed context switches.  Only one mechanism for removing noise is available: that shown in Fig. \ref{four-options-fig}, Panel d).  Hence we can conclude:
\begin{quote}
{\em The FEP asymptotically drives alignment of QRFs across $\mathscr{B}$.}
\end{quote}
\noindent
It drives any observer $A$, in particular, to match any context switches by its interaction partner $B$ in order to maintain QRF alignment.  Let us now consider, therefore, a situation in which {\em all} QRFs deployed by $A$ and $B$ are aligned as in Fig. \ref{four-options-fig} d), and in which {\em all} of $A$'s QRFs have learned the local behavior of $\mathcal{P}_B$ on their sectors perfectly.  The local behavior of $\mathcal{P}_B$ on some shared sector $X$ is determined by $B$'s QRF $\mathbf{X_B}$.  This QRF $\mathbf{X_B}$ is, however, a quantum computation; as such, it encodes nonfungible -- not finitely classically encodable -- information as shown in \cite{bartlett:07}.  The future behavior of $\mathbf{X_B}$ can, therefore, only be perfectly predicted by $\mathbf{X_B}$ itself, that is:
\begin{equation} \label{equal-QRFs}
Er_X \rightarrow 0 ~\Rightarrow~ \mathbf{X_A} = \mathbf{X_B}
\end{equation}
\noindent
If $A$ and $B$ are separable, the consequent in Eq. \eqref{equal-QRFs} violates the no-cloning theorem \cite{wooters:82}: it demands that the internal quantum state $\vert B\vert_X (t) \rangle$, the time (external $t$) evolution of which implements $\mathbf{X_B}$, be replicated exactly in $A$.  Hence, if Eq. \eqref{equal-QRFs} holds, $A$ and $B$ cannot be separable.  Therefore we have:
\begin{quote}
{\em If $AB$ is isolated, the FEP asymptotically drives the joint state $\vert AB \rangle$ to entanglement.}
\end{quote}
\noindent
The claim that isolated systems are driven to entanglement is familiar from our initial discussion of bipartite interactions in \S\ref{holographic-1}: all isolated systems are driven to entanglement by unitary evolution.  Separability is a special case, an approximation that holds only under conditions of weak interactions and short observation times.  Hence, we can, finally, conclude:
\begin{quote}
{\em The FEP is, asymptotically, the Principle of Unitarity.}
\end{quote}
\noindent
The FEP is, in other words, asymptotically equivalent to the first axiom of quantum theory.  Any quantum system, therefore, {\em must} behave in accord with the FEP; doing so is simply approaching entanglement with its environment as required by the Principle of Unitarity.  The FEP is therefore, consistent with the results obtained in \cite{friston:19}, a fundamental, generic physical principle.  Conversely, the Principle of Unitarity -- the principle that observable information is conserved -- can now be seen as a fundamental principle of cognitive science.

As an example of the FEP in action, let us return to the situation in which a ``small'' trivial system $B$ interacts with a larger, nontrivial system $A$ considered above.  The only freedom $B$ has in this case is freedom of basis choice for reading and writing from $\mathscr{B}$.  From $B$'s perspective, any basis choice that is misaligned with $A$'s basis choice generates noise.  The FEP will, therefore, drive $B$ to align its choice of basis -- $z$ axis in Fig. \ref{qubit-screen-fig} -- with that of $A$.  This basis alignment, however, leaves $B$ entangled with $A$.  This is not surprising.  When a photon, for example, interacts with an atom, it is completely absorbed and loses its identity as a ``thing''; its state becomes irreversibly entangled with that of the atom with which it interacts.  

We can develop this result more formally as follows, noticing that a preparation operation by $\mathbf{B_X}$ followed by a measurement operation by $\mathbf{A_X}$ can be described by a CCCD in the form of Diagram \eqref{cccd-1}, with the classifiers $\mathcal{A}_i$ identified with the bits comprising sector $X$ that are prepared and then measured.  The question of asymptotic behavior is then the question of whether this CCCD is symmetric, i.e. whether the cores $\mathbf{C^{\prime}} = \mathbf{D^{\prime}}$ in Diagram \eqref{cccd-1}.  We approach this by considering the Markov kernels $\mathbb{M}^A_X$ and $\mathbb{M}_X$, the difference between which defines the error $Er_X$ (via Eq. \eqref{error3}) that the FEP asymptotically sends to zero. 

As shown in \cite[p.17]{pratt:97}, following \cite[\S 4]{pratt:96}, every category $\mathfrak{C}$ whose arrows form a set $K$ embeds fully as a subcategory into $\mathbf{Chu}$.   There is, therefore, an induced mapping $\mathscr{F}: \mathfrak{C} \longrightarrow \mathbf{Chu}$ realizing $\mathscr{F}(\mathfrak{C})$ as an embedded subcategory of $\mathbf{Chu}$ that consists of some objects of the latter, together with all of the arrows between them (for background on such embeddings, see e.g. \cite{adamek:04, awodey:10}).

Now we recall the direct association between Markov kernels and conditional probabilities, and apply the above embedding to the category $\mathfrak{P}$ of conditional probabilities, following mainly \cite{culbertson:13, culbertson:14} (cf. \cite{giry:82, lawvere:62}). Objects of $\mathfrak{P}$ consist of countably generated measurable spaces $(\mathcal{X}, \Sigma_{\mathcal{X}})$, and arrows of $\mathfrak{P}$ between two such objects, having the form:
\begin{equation}\label{markov-ker-1}
\mathbb{M}: (\mathcal{X}, \Sigma_{\mathcal{X}}) \longrightarrow (\mathcal{Y}, \Sigma_{\mathcal{Y}}).
\end{equation}
\noindent
These arrows are Markov kernels assigning to each $x \in \mathcal{X}$, and each measurable set $Q \in \Sigma_{\mathcal{Y}}$, the probability of $Q$ given $x$, denoted here by $\mathbb{M}(Q \vert x)$, whenever defined.
We can also write this as $p_{\mathbb{M}}(Q, \vert x)$, the (regular) conditional probability determined by the arrow $\mathbb{M}$, `regular' in so far that $\mathbb{M}$ is conditioned on points rather than on measurable sets in $\Sigma_{\mathcal{X}}$.  As arrows given by Eq. \eqref{markov-ker-1} specify a Markov process, they comprise a semigroup, and therefore a set.  Hence, we obtain a full embedding $\mathfrak{P} \longrightarrow \mathbf{Chu}$, where the (arrow) set $K$ is identified with a set of conditional probabilities $K = \{p_{\mathbb{M}}(S, \vert x)\} \subseteq [0,1]$.\footnote{In a similar way, Abramsky in \cite{abramsky:12} shows that the Dirac-von Neumann formulation of quantum mechanics can be conveniently represented in the category $\mathbf{Chu}$.}  Using the fact that $\mathbf{Chu}$ and the Channel Theory category $\mathbf{Chan}$ are isomorphic categories with respect to their respective objects and arrows, we can summarize as follows:
\begin{proposition}\label{markov-embed-1}
The category $\mathfrak{P}$ embeds fully into $\mathbf{Chan}$ with classification $\Vdash_{\mathcal{A}}$ realized by the conditional probability $p_{\mathbb{M}}(\cdot \vert  \cdot)$ (the Chu space valuation/satisfaction relation) whenever this is defined.
\end{proposition}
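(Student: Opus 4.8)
The plan is to assemble the proposition from three ingredients already in place: Pratt's full-embedding theorem for categories whose arrows form a set \cite{pratt:97, pratt:96}, the identification of $\mathfrak{P}$ as a genuine category of Markov kernels over countably generated spaces \cite{culbertson:13, culbertson:14}, and the object-and-arrow isomorphism $\mathbf{Chu} \cong \mathbf{Chan}$ invoked in \S\ref{chan-th}. First I would verify the hypothesis of Pratt's theorem for $\mathfrak{P}$, namely that its arrows form a set $K$. The arrows are the regular conditional probabilities $\mathbb{M}(Q \vert x)$ attached to kernels of the form \eqref{markov-ker-1}; since kernel composition is the Chapman--Kolmogorov integral and is associative, the arrows carry a semigroup structure, and every value $p_{\mathbb{M}}(S \vert x)$ lies in $[0,1]$. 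Taking $K = \{ p_{\mathbb{M}}(S \vert x) \} \subseteq [0,1]$ as the common value type then certifies $\mathfrak{P}$ as a category to which Pratt's theorem applies.

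Next I would apply that theorem to obtain a full embedding $\mathscr{F}: \mathfrak{P} \to \mathbf{Chu}(\mathbf{Set}, K)$ --- faithful, injective on objects, and surjective on each hom-set between images. Concretely, each measurable space $(\mathcal{X}, \Sigma_{\mathcal{X}})$ is represented by the Chu space whose tokens are the kernels issuing from it and whose types are measurable sets, with the Chu valuation $\mathcal{X} \times \Sigma \to K$ given precisely by $(x, Q) \mapsto p_{\mathbb{M}}(Q \vert x)$. The essential content of this step is that composition in $\mathfrak{P}$ (integration of kernels) matches composition in $\mathbf{Chu}$, encoded by the adjoint pair $(\overrightarrow{f}, \overleftarrow{f})$ of Definition \ref{class-def-2}, so that the valuation transports Markov-kernel structure faithfully.

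I would then compose $\mathscr{F}$ with the isomorphism $\mathbf{Chu} \cong \mathbf{Chan}$. Under this isomorphism a Chu valuation becomes a classification $\models_{\mathcal{A}}$, so the composite $\mathfrak{P} \to \mathbf{Chan}$ is again a full embedding, and the classification realizing it is exactly $\Vdash_{\mathcal{A}} = p_{\mathbb{M}}(\cdot \vert \cdot)$ on the image classifiers --- which is the assertion to be proved. Because full embeddings compose and the isomorphism is an equivalence on both objects and arrows, no fullness or faithfulness is lost in this transfer.

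The hard part will be the qualifier ``whenever this is defined.'' Regular conditional probabilities need not exist for arbitrary measurable spaces; their existence via disintegration is precisely what forces the restriction to \emph{countably generated} $(\mathcal{X}, \Sigma_{\mathcal{X}})$ in the definition of $\mathfrak{P}$, following \cite{culbertson:13, culbertson:14}. I would therefore need to confirm that on this subclass the valuation $(x,Q) \mapsto p_{\mathbb{M}}(Q \vert x)$ is everywhere well-defined up to the usual almost-everywhere ambiguity in the conditioning point $x$, so that the Chu space is genuinely a $K$-valued object and the embedding is not spoiled on a null set of conditioning points. Once existence and essential uniqueness of the disintegration are secured on this subclass, the remaining checks --- functoriality, faithfulness, and fullness --- follow mechanically from Pratt's theorem together with the $\mathbf{Chu}$--$\mathbf{Chan}$ isomorphism.
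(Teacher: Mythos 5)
Your proposal follows essentially the same route as the paper's own proof: invoke Pratt's full-embedding theorem for categories whose arrows form a set (with the set-hood of the arrows of $\mathfrak{P}$ certified by the same semigroup argument and the identification $K = \{p_{\mathbb{M}}(S \vert x)\} \subseteq [0,1]$), then transfer the embedding $\mathfrak{P} \to \mathbf{Chu}$ along the object-and-arrow isomorphism $\mathbf{Chu} \cong \mathbf{Chan}$ so that the classification is realized by $p_{\mathbb{M}}(\cdot \vert \cdot)$. Your additional attention to the existence of regular conditional probabilities on countably generated spaces is a refinement the paper dispatches with the qualifier ``whenever this is defined,'' not a different argument.
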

Without loss of generality, we can assume the CCD in Diagram \eqref{ccd-prob-1} to be a diagram in this embedded subcategory (for a survey of probability spaces and Bayesian belief networks in terms of the categories $\mathbf{Chu}$ and $\mathbf{Chan}$, see \cite[\S2]{fg:21}).

We can now proceed to consider the asymptotic behavior of the FEP, in particular the conditions under which $\vert \mathbb{M}_X - \mathbb{M}^A_X \vert \rightarrow 0$ for an arbitrary system $A$ and sector $X$ (with QRF $\mathbf{X}$), in the setting where $A$ interacts with some $B$ and the joint system $AB$ is isolated. To make sense of the formal difference $\mathbb{M}_A - \mathbb{M}_B$, that is, to make sense of the difference between two arrows in $\mathfrak{P}$, we can use the metric distance on Markov kernels as in Eq. \eqref{error}. This distance is then manifestly the difference between the conditional probabilities $p_{\mathbb{M}_A}(\cdot \vert \cdot) - p_{\mathbb{M}_B}(\cdot \vert \cdot)$, in the way the corresponding Markov kernels determine these as described above. As discussed previously, this is a difference in the (conditional) probabilities of observable behavior in sectors. Thus, $\vert \mathbb{M}_A - \mathbb{M}_B \vert \rightarrow 0$ is interpreted as the metric distance $d(\mathbb{M}_A,\mathbb{M}_B) \rightarrow 0$ in the asymptotic limit.

With the labeling by $A,B$, let us return to \eqref{ccd-prob-1} which we adopt to provide two separate CCDs as specified by \eqref{ccd-prob-1} denoted by $\rm{CCD}_A$ and $\rm{CCD}_B$. The asymptotic limit $\rightarrow 0$ of the metric distance of the kernels determining the difference of the conditionals (hence $\rightarrow 0$ in the asymptotic limit) provides the sense in which the diagrams $\rm{CCD}_A$ and $\rm{CCD}_B$, along with their respective colimits $\rm{colim}(\mathbf{C}'_A)$ and $\rm{colim}(\mathbf{C}'_B)$, can be identified in this limit (or in the notation of Diagram \eqref{cccd-1}, the cores $\mathbf{C^{\prime}}$ and $\mathbf{D^{\prime}}$ as required above become identical).

\section{Discussion} \label{disc}

\subsection{High-level overview} \label{overview}

While much of the foregoing has been technical, it is conceptually straightforward.  To briefly review, the classical FEP as developed in \cite{friston:19} considers the joint environment-agent system as a random dynamical system that possesses an attracting set.  By placing particular constraints on the coupling among systemic states (e.g., with sparse coupling in flow operators or stochastic differential equations), one can partition the joint state space into external, internal and blanket (MB) states. In turn, the blanket states are partitioned into sensory states that mediate the influence of external states on internal states and active states that mediate the influence of internal states on external states. Crucially, this particular partition imposes conditional independence between internal and external states, given the blanket states. The final move in the classical FEP is to induce a variational density $Q_{\mu} (\eta )$ over external states that is parameterised by internal states. It is then fairly straightforward to show that the expected flow of internal (and active) states can be expressed as a gradient flow on a variational free energy. This free energy is effectively the divergence between the variational density encoded by internal states and the density over external states conditioned on the blanket states. This licences an interpretation of internal and active states in terms of active inference, or a Bayesian mechanics, in which their expected flow can be read as perception and action, respectively. In other words, active inference is a process of Bayesian belief updating that incorporates active exploration of the environment.

Reformulating the FEP within quantum information theory allows us to drop the assumption of an observer-independent spacetime background characterized by (continuous) frames of reference that can be specified to infinite precision, and also to drop the assumption of randomness. In the quantum formulation, the blanket states are implemented by a holographic screen separating the interacting systems $A$ and $B$.  The screen is the (topological) locus of the interaction $H_{AB}$; ``sensory'' and ``active'' states of the classical MB become incoming and outgoing encodings of bits on the screen.  The interaction is symmetrical across the screen: the reciprocal exchange between ``internal'' $A$ and ``external'' $B$ systems takes the form of answers and questions (formulated as Hermitian operators $M^k_i$), where questions correspond to classical action (mediated by active states) and answers correspond to classical perception (mediated by sensory states).  There is, crucially, no assumption that $A$ and $B$ share preparation and measurement bases; basis mismatches between $A$ and $B$ generate quantum noise that is ``perceived'' as classical noise.

An ``agent'' in the quantum formalism is a system that deploys quantum reference frames -- effectively, {\em concepts} that identify persistent objects and their time-varying states -- when interacting with its environment.  Deploying distinct QRFs breaks the thermodynamic symmetry of the screen for each agent, redirecting energy flows within each agent to fund processing and recording memory records of some bits while others -- those in the ``thermodynamic'' sector $F$ -- remain necessarily unobserved.  Thus the quantum formalism explicitly enforces a distinction that the classical formalism leaves implicit: that between observed and unobserved parts of the environment.  Mismatches between QRFs deployed by two interacting agents generate noise and prediction errors, including incorrect Frame Problem solutions and failures to correctly re-identify objects.

In the classical setting, agents ``self-evidence'' by maintaining their nonequilibrium steady-states or, equivalently, the integrity of their MBs.  From a mathematical point of view, this is maintaining their identities (in the sense of being independently well-defined) as systems.  In the quantum setting, being independently well-defined is being {\em separable} from -- not entangled with -- the environment; hence ``self-evidencing'' is maintaining separability.  The FEP identifies self-evidencing with the minimization of Bayesian prediction error: to``be" is to be capable of successful predictions; sometimes described as ``predicting yourself into existence".  Minimizing prediction error is, in the quantum setting, minimizing the difference between two Markov kernels.  As in the classical setting, noise and errors due to insufficient learning must both be minimized, a process that requires trade-offs between the two.

The classical and quantum formulations of the FEP differ in their asymptotic behavior, i.e., as total prediction error is driven toward zero.  A classical ``perfect predictor'' achieves maximal classical correlation and hence maximal behavioral synchrony with its environment; it becomes a ``perfect regulator'' in the sense of the good regulator theorem \cite{conant:70}.  In a quantum setting, perfect prediction entails shared QRFs between interaction partners and hence entanglement; a quantum ``perfect regulator'' becomes indistinguishable from the environment it is predicting.  While this difference in asymptotic behaviors is a formal, mathematical outcome, it can be traced to a difference in fundamental assumptions.  The classical formalism assumes a spacetime background, and hence can rely on separation in space to distinguish between systems.  The quantum formalism is background free: space is simply an observable, represented by a QRF that a system may or may not deploy.  It can, therefore, play no ``ontic'' role in maintaining distinctions between systems.  This reflects the general role of (``physical'' 3d) space in quantum field theories: space is there to enforce separability (see \cite{addazi:21} for a general discussion of this point from a gauge-theoretic perspective).

Both classical and quantum formulations of the FEP engender a fundamental form of {\em epistemic} solipsism, in the sense that the coupling between internal and external systems precludes the states of one from ever ``knowing'' the states of the other.  While this seems counter-intuitive, it is a straightforward consequence of the use of vector spaces to represent physical states.  Vector-space product operators -- the classical Cartesian product or the Hilbert-space tensor product -- are by definition associative (equivalently, dynamic operators such as Hamiltonians are additive); product decompositions are, therefore, undetectable across decompositional boundaries in any vector space \cite{fields:16}.  This sense of epistemic solipsism does {\em not}, as Fuchs emphasizes \cite{fuchs:10}, in any way suggest ontic or metaphysical solipsism.  Both classical and quantum formulations of the FEP -- indeed, any theory of measurement or observation -- requires two interacting systems, observer and observed.  The idea of a metaphysically solipsist theory of observation is self-contradictory. 

\subsection{Summary of results} \label{results}

We have obtained three results in this paper:
\begin{enumerate}
\item Given the standard free-choice assumption, the intuitive idea of an ``agent'' or IGUS can be fully formulated within background-independent, scale-free quantum information theory.
\item The FEP can be given a quantum-theoretic formulation that renders it applicable to generic quantum systems.
\item When formulated as a generic principle of quantum information theory, the FEP is asymptotically equivalent to the Principle of Unitarity.
\end{enumerate}
\noindent
Result 1) places the long-standing practice of treating generic quantum systems as ``observers'' on a firm theoretical foundation.  It allows a formal specification, in the language of QRFs, of exactly what systems a given observer is capable of recognizing and what states of those systems it is capable of measuring.  Such specifications require no ``ontic'' assumptions of observer-independent systems; hence they are compliant with a device-independent theoretical strategy.  Result 1) also provides a formal definition of an ``agent'' in the context of the free-choice assumption, and provides a generic language -- the category-theoretic language of Channel Theory -- for specifying the semantics assigned by an agent to an observational outcome.  Finally, Result 1) shows how semantics arise from thermodynamic symmetry breaking on a holographic screen, and provides a formal mechanism for quantifying energy flows that enable classical computation and classical memory encoding.

Result 2) extends the range of applicability of the FEP to generic quantum systems independently of spacetime background or scale-dependent assumptions.  Quantum fields, black holes, and other spatially-distributed or topologically-defined systems can, therefore, be regarded as Bayesian observers and, if free choice is assumed, as Bayesian agents.  Result 2) therefore makes clear the sense in which generic quantum systems can be regarded as ``users'' of quantum information theory, as proposed under the rubric of QBism \cite{fuchs:13, mermin:18, fuchs:10}.  Indeed Result 2) renders QBism a consequence of quantum information theory, not an interpretation.

Result 3) shows that the FEP is compliant with the Principle of Unitarity and, conversely, that unitary evolution is compliant with the FEP.  It allows us, in particular, to understand quantum context-switching as both a source of prediction errors and a strategy for reducing prediction errors.  Result 3) also allows us to view separability as a resource for classical communication and computation that is analogous to entanglement as a resource for quantum communication and computation.  Hence, it allows us to consider trade-offs between these resources by systems that maintain approximate separability while also employing shared entanglement.  Such systems have been studied in the abstract, and can potentially exceed the computational power of Turing machines (e.g. can solve the Halting problem as shown in \cite{ji:21}).  Result 3) suggests that (some) biological systems may have this capability, as discussed further in \S\ref{pred} below. 

\subsection{Applications to biological cognition} \label{biology}

In addition to the results listed above, the current framework has a variety of more specifically biological consequences, some of which have been discussed already in \cite{fgl:21}.  It predicts, for example, that moving in ordinary 3d space does not require a QRF for Euclidean space and hence, does not require an experience of space.  From a classical perspective, this is certainly true; as evidenced e.g. by place and grid cells in mammalian brains \cite{moser:15, stachenfeld:17, whittington:19} that appear to encode a coarse-grained representation of location, head-direction etc., in various frames of reference. This coarse graining endorses another prediction; namely, that actionable classical encodings are coarse-grained. Any system that encodes information irreversibly is, therefore, faced with a choice that its computational architecture must resolve: the trade-off between preserving information in memory and losing information due to coarse graining. On a classical FEP account, this coarse graining is mandated by the minimization of VFE, which can be expressed as complexity minus accuracy. Classically, complexity corresponds to the KL divergence between posterior and prior beliefs as in Eq. \eqref{VFE-def}. This relative entropy clearly depends upon the degree of discretisation or coarse graining afforded by the dimensionality of internal states. Effectively, this KL divergence scores the computational and thermodynamic cost of belief updating that is mitigated by coarse graining or, in quantum terms, devoting memory resources to the results computed by only some QRFs, possibly in a context-sensitive fashion \cite{lloyd:13}.

The ubiquity of context-dependent effects leads to another prediction: living systems in complex environments will evolve context and attention switching systems. On a classical view, this entails the identification of context (cf. the role of pointers) in a hierarchical generative model, where high level states contextualise the processing of lower level states. This immediately introduces the notion of MBs or holographic screens in the joint space of an agent's internal states, i.e. a notion of modularity supported by shared memory.  This is gracefully accommodated by the channel theory of QRFs as developed in \S\ref{chan-th}.\footnote{See \cite{fg:20b} for an explicit application to global-workspace theory and \cite{safron:20} for a more ambitious synthesis of active inference, a global workspace, and IIT.} Indeed, much work in the classical field of active inference rests upon optimising the hierarchical structure of deep generative models, via various free energy minimizing processes \cite{davis:03, george:09, friston:17a, friston:17b}. This is especially true for generative models of navigation and language \cite{mackay:95b, teh:06, friston:20} that are almost universally based upon quantisation or discretisation of state spaces. The attendant Boolean logic that arises from the imposition of Boolean constraints by QRFs may have important practical applications, as demonstrated by recent work in active vision and scene construction \cite{parr:17} that rests upon a generative model of the sensory (visual) consequences of visually foraging a scene with multiple objects. Active inference in this context can be seen as an inversion of a generative model that maps from causes (external objects), to consequences (sensory states) (cf. \cite{pizlo:01}).

The inversion of generative models for active vision is extremely difficult and ill-posed due to its computational complexity. These models have to accommodate all the natural physical laws of motion and optics (e.g., occlusion). The inversion of these models corresponds to the measurement operators that mediate belief updating. In a classical setting, this can be massively finessed by coarse graining the problem and applying Boolean operators. For example, if one object is near and another object is far, an agent will see the near object. Quantising or coarse graining the internal (i.e. generative) model along these lines reduces the likelihood mapping to a set of relatively simple Boolean operators that could be cast as measurement operators in a quantum-theoretic context. At present, when these operators are encoded on standard classical computers for simulation, they are effectively implemented with enormous tensors. Reading and writing these tensors into working memory dwarfs the actual compute time and renders active vision schemes of this sort computationally and thermodynamically inefficient. One might imagine that a combination of quantum computing \cite{feynman:82, nielsen:00} and neuromorphic engineering \cite{mead:90, tang:19} may be able to parallel the efficiency of human vision.

There are, in addition, certain technical problems that are resolved in moving from a classical to a quantum FEP formulation of active inference that go beyond a commitment to unitary processes and binary measurement operators. These include problems entailed by assuming reference frames that can be specified to infinite precision. In formalising the asymptotic behaviour of belief updating in the absence of random fluctuations in the classical formalism, for example, one has to deal with differential entropies that are ill-defined for very precise conditional densities, e.g. Dirac Delta functions. One workaround is to use Jaynes' limiting density of discrete points (LDDP) \cite{jaynes:57}, which brings us back to where we started, namely finite dimensional Hilbert spaces and discrete state spaces.

Finally, from a more philosophical perspective, the framework presented here is consistent with \cite{fgl:21, levin:20, friston:20a} in supporting a panpsychist perspective on questions of agency, sentience, and cognition.   As our Definition \ref{agent-def} of agency illustrates, traditional binary categorizations of entities into those having agency, sentience, and cognition and those lacking them are replaced here by continua of ``interestingness'' along these dimensions.  Definition \ref{agent-def} assumes free choice, in particular of measurement basis.  The Conway-Kochen theorem \cite{conway:09} states that if {\em any} physical system is assumed to have free choice, then {\em all} physical systems must be assumed to have free choice.  ``Free choice'' in this case means behavior that is not determined by (cannot be fully predicted given) the events in the system's past lightcone.  Determination of behavior by events in the past lightcone is {\em local} determinism; such local determinism is fundamentally inconsistent with the global determinism implied by unitarity \cite{fields:13a}.  Hence what is interesting is the {\em extent} of choice.  As emphasized in \cite{friston:19}, interesting choices are implemented by internal processes.   An electron, for example, has internal states -- its charge and mass are not states of its MB -- but they are invariant, and so do not implement choices.  The internal states of rocks are not invariants -- changes in water content or radioactive decay can occur -- but they do not, in general, implement interesting choices from our human perspective.  Interesting choices require interesting sensations and actions, and hence significant internal energy flows as discussed above.  They require differential {\em use} of thermodynamic resources to deploy multiple QRFs that probe the observable environment in different ways.  Systems, including organisms, clearly do this to different extents; even among humans, the extent of variation is striking.  Consonant with a broadly-construed enactivism \cite{maturana:80, varela:91, anderson:03, froese:09}, we view such active exploration of the environment -- hence, active inference -- as indicating cognition and intelligence.  

\subsection{Predictions and open questions} \label{pred}

The results obtained here suggest that the field of quantum biology is far larger than has so far been explored; indeed they suggest that all biological systems are properly considered quantum systems and can be expected to employ quantum coherence as an information processing resource.  This suggestion is of course not novel, having been explored by many authors from philosophical, theoretical, and increasingly over the past two decades, empirical perspectives \cite{schrodinger:44, wigner:61, penrose:89, hameroff:96, tegmark:00, davies:04, hameroff:04, arndt:09, lambert:12, melkikh:15, brookes:17, mcfadden:18, marais:18, cao:20, kim:21}.  We have, however, shown here that it follows from general principles: quantum systems with sufficiently complex internal information flows can be expected to exhibit active inference, and hence to behave like organisms.  Indeed, quantum systems that do {\em not} display evident intelligence -- quantum systems that are trivial agents -- become an unusual special case.  Conversely, the results obtained here suggest that the concepts of active inference, agency, Bayesian satisficing, and cognition are applicable to generic quantum systems, and hence to physical systems across the board.  This turns the traditional question of the emergence of intentionality \cite{polanyi:68, rosen:86, ellis:05} on its head: by making agency a generic expectation, it makes the ``merely physical'' the special case demanding explanation.   It thus suggests that the traditional division between agency and mechanism is a hindrance, not a help, in the task of understanding the natural world.

We make, in particular, the following predictions:
\begin{enumerate}
\item The internal, molecular-scale dynamics of both prokaryotic and eukaryotic cells implement quantum information processing, i.e., make essential use of quantum coherence as a computational and memory resource.  This is consistent with recent results showing that the free energy budgets of biological systems across known phylogeny are orders of magnitude short of the resources required for purely classical computation at the molecular scale \cite{fl:21}.
\item Interacting biological systems trade off separability against entanglement and hence classical against quantum communication.  Interactions between biological systems from the molecular scale upwards can be expected to display quantum contextuality and violations of the Bell and Leggett-Garg inequalities.
\end{enumerate}
\noindent
These predictions remain largely untested, both for reasons of technical difficulty and due to a still-pervasive traditional view of macroscopic systems as ontically classical.  The mechanistic role of entanglement -- even in relatively well-established systems such as light harvesting -- remains subject to considerable debate \cite{marais:18, cao:20, kim:21}.  Quantum context switching has been detected in human subjects \cite{cervantes:18, basieva:19}, but whether it is properly considered ``quantum'' or merely ``quantum-like'' remains open to question \cite{khrennikov:15}.\footnote{Such ``quantum-like'' contextuality is claimed for a certain case of gene expression in \cite{basieva:11} where conflicting probabilities (quantum vs. classical) do not give rise to a consistently definable joint distribution, so suggesting a variant of Kolmogorov contextuality.}  Neither Bell nor Leggett-Garg inequality violations have been conclusively demonstrated with biological systems.

Setting technical difficulties aside, we suggest that coherence effects in biological settings may be systematically overlooked due to being dismissed as ``noise'' or ``random coincidence.''  Thermal noise in biological settings is well characterized; general environmental variation, including effects of signaling by other cells or organisms, is less straightforwardly modeled or controlled.  Experimental designs that explicitly test for coherence effects will, we expect, be required to test the above predictions.  Focused theoretical support for such designs is therefore necessary.  

Interactions involving multiple agents remain an open theoretical as well as experimental-design problem.  In the bipartite setting employed for the technical analysis above, every agent interacts with its entire surrounding, whether the bits transferred by this interaction are observed and processed or not (i.e. whether they are included in sector $E$ or $F$).  This does not change in a multi-party setting; each agent still interacts with its entire surround, identifying other agents (or not) via specific QRFs.  The ubiquitous assumption that inter-agent communication is classical, made in domains as disparate as cell-cell interaction and human natural language use, becomes problematic in this setting.  As discussed in \S\ref{asymptotic}, shared QRFs are required for fully-shared, counterfactual-supporting semantics, but they induce entanglement (see also the discussion of this point in \cite{fgm:21}).  The extent of shared semantics is not readily observable in living systems.  In a fully classical setting, generalised synchronisation (a.k.a., synchronisation of chaos) emerges when two free energy minimising `partners' observe each other \cite{frith:15}.  ``Perfect prediction'' of the partner's behavior may result, driven by classical learning of a shared generative model, implicitly resolving the hermeneutics problem in the communication context \cite{frith:13, frith:15a}.  Recognizing a particular communication partner, in this case, requires invoking a particular generative model, the correct model to predict that partner's behavior \cite{isomura:19}.  Such classical synchronization is not, however, robust against perturbation; altering one agent's model does not ``automatically'' alter the other, as would be expected if the models, i.e. the QRFs implemented by the two agents were entangled.  Both theoretical and experimental characterization of relevant QRFs will be required to assess the extent to which classical communication can be considered purely classical, and detemine where and how quantum coherence contributes to in-practice successful shared semantics.

Darwinian evolution can be viewed as a process of variation and selection of QRFs, and hence as an instance of the multiple-agents problem in which semantics is only partially shared.  Evolution can be given a natural description in the framework of the classical FEP \cite{friston:13, campbell:16, fl:20a, fl:20b}.  Variation and selection have been advanced as a model of decoherence under the rubric of quantum Darwinism \cite{zurek:03, zurek:09}; see \cite{campbell:10} for a discussion from the perspective of universal Darwinism.  The selection mechanism invoked by quantum Darwinism, however, assumes QRF sharing by multiple agents \cite{fields:10}; see \cite{fgm:21, fm:20} for discussion.  While the present results allow any evolutionary system coupled to a larger environment to be viewed as a Bayesian agent implementing active inference, a fully-satisfactory account of variation and selection within a quantum framework remains to be developed.  

Additional theoretical work is also needed to understand the relationships among the many distinct models of quantum contextuality that have been put forward, often with the use of quite different formal tools (e.g. \cite{abramsky:11,abramsky:12a,popescu:14,adlam:21}).  The question of ``context'' is deeply tied up with that of what is to be regarded as the ``environment'' or ``surrounding'' of any given system.  This is a particularly critical question at the cellular level, where multiple signaling modalities with different spatial ranges and temporal characteristics are present.  Models that explicitly characterize the ``spaces'' in which cells and multicellular systems operate -- e.g. the space of potential morphologies, or that of potential messages from interaction partners -- and that consider constraining effects of one space on another both within and between scales will be needed to understand the roles of context, and of context switching, in biological systems.

In closing, we hope that we have shown to readers familiar with the FEP and the active inference framework that quantum effects are worth considering both theoretically and in experimental design.  For readers not familiar with the classical FEP but literate in quantum theory (or vice versa), we hope this paper has gone some way to contextualizing your QRFs in sense-making via Markov blankets and their underlying holographic screens.

\section*{Conflict of Interest Statement}

The authors declare that the research was conducted in the absence of any commercial or financial relationships that could be construed as a potential conflict of interest.

\section*{Funding}

The work of C.F. is supported in part by the Emerald Gate Foundation.  K.J.F. is supported by funding for the Wellcome Centre for Human Neuroimaging (Ref: 205103/Z/16/Z) and the Canada-UK Artificial Intelligence Initiative (Ref: ES/T01279X/1).  M.L. gratefully acknowledges support by the Guy Foundation Family Trust (103733-00001), the John Templeton Foundation (62212), and the Elisabeth Giauque Trust.

\end{document}